\newenvironment{proof}{\noindent{\em Proof :}}{ \hfill $\Box$ \\}
\newtheorem{theorem}{Theorem}[section]
\newtheorem{definition}{Definition}[section]
\newtheorem{proposition}{Proposition}[section]
\newtheorem{lemma}{Lemma}[section]
\newtheorem{corollary}{Corollary}[section]
\newtheorem{example}{Example}[section]
\newcommand{\q}{\quad}
\newcommand{\qq}{\qquad}
\newcommand{\Tr} {\mathrm{tr}}
\newcommand{\var} {\mathrm{var}}
\newcommand{\diag} {\mathrm{diag}}
\newcommand{\rank} {\mathrm{rank}\,}
\newcommand{\Var} {\mathrm{Var}\,}
\newcommand{\Frac}[2]{{\displaystyle\frac{#1}{#2}}}
\newcommand{\Span} {\mathrm{span}\,}
\newcommand{\virg}[1]{``#1''}
\newcommand{\bmat}{\left[ \begin{matrix}}
\newcommand{\emat}{\end{matrix} \right]}
\newcommand{\E}{{\mathbb E}\,}
\newcommand{\Rbb}{\mathbb R}
\newcommand{\Zbb}{\mathbb Z}
\newcommand{\xb}{\mathbf  x}
\newcommand{\yb}{\mathbf  y}
\newcommand{\zb}{\mathbf  z}
\newcommand{\wb}{\mathbf  w}
\newcommand{\vb}{\mathbf  v}
\newcommand{\eb}{\mathbf  e}
\newcommand{\ub}{\mathbf  u}
\newcommand{\Hb}{\mathbf H}
\newcommand{\Ib}{\mathbf I}
\def\xib{\boldsymbol{\xi}}
\def\etab{\boldsymbol{\eta}}
\title{Modeling complex systems by \\ Generalized Factor Analysis }
\author{ Giulio Bottegal and Giorgio  Picci~\IEEEmembership{Life~Fellow,~IEEE}
\thanks{G. Bottegal is with  the ACCESS Linnaeus Centre, School of Electrical Engineering,
KTH Royal Institute of Technology, SE-100 44 Stockholm, Sweden;   {\tt\small bottegal@kth.se}}
\thanks{ G. Picci is with   the Department of Information
Engineering, University of Padova, Padova, Italy;  {\tt\small picci@dei.unipd.it}} }
\begin{document}

\maketitle

\begin{abstract}
We  propose a new modeling paradigm for large dimensional aggregates of stochastic systems by  Generalized Factor Analysis (GFA)  models. These models describe the data as the sum of a {\em flocking} plus an uncorrelated {\em idiosyncratic } component. The flocking component describes a sort of collective  orderly motion  which   admits a much  simpler   mathematical description than the whole ensemble while the idiosyncratic component describes weakly correlated noise. We first discuss  static GFA representations and characterize  in a rigorous way the properties of the two components. The extraction of the dynamic flocking component  is  discussed  for time-stationary linear systems and  for a simple classes of separable  random fields.
\end{abstract}
\thispagestyle{empty}

\section{Introduction}\label{Introd}
It has been observed in several circumstances \cite{Deistler-Z-07,anderson_deistler_2008,Deistler-A-F-Z-010,Deistler-etal-2010} that modeling and identification of complex stochastic systems by traditional AR or ARMA models may lead to  problems where the number of parameters can be  of the same order of magnitude or larger than the sample size. The only way out of this problem seems to be to change our ideas on modeling. In this paper we propose a new paradigm on stochastic modeling of complex systems based on the theory of {\em Generalized Factor Analysis (GFA)} and the idea of {\em stochastic flocking}. Although the two terminologies belong  to different cultures which seem to have little in common, our point in this paper will be to show that   dynamic GFA modeling of a large  ensemble of interacting random units   hinges  on    splitting   the overall motion   into a  component which deserves the name of {\em flocking}  plus a  weakly correlated kind of noise. The latter  is called the  {\em idiosyncratic} component. The first component describes the average random motion of the system by a rather simple statistical model while the second aims at describing the stochastic dynamics which pertains exclusively to   individual fluctuations about the average.

 The word   {\it Flocking} is used to describe     a commonly observed  behavior in gregarious animals  by which many equal individuals tend to group and follow, at least approximately, a common path in space. The  phenomenon  has been studied very actively in recent years; see e.g.  \cite{Reynolds-87,Vicsek-etal-95,Veerman-etal-05,Brockett_10} and the literature  on this subject  is now huge, consisting of  hundreds of papers which would be  impossible to discuss here. Our interest in flocking derives from the fact that the phenomenon has similarities with many scenarios observed in artificial/technological   environments a  few examples of which will be described below.

 The mechanism of  formation  of flocks is sometimes also called {\em convergence to consensus}  and has   been intensely studied in the literature. There is now a quite articulated theory   addressing the convergence to consensus under  a variety of assumptions on the communication strategy among agents, specific nonlinearities of the dynamics, the kind of permissible local control actions etc. see e.g.  \cite{Jadbabaie-L-M-03, Fagnani-Z-08,Olfati-F-M-07,Tahbaz-J-10,Cucker-S-07,Olfati_06, Shen-07,Tanner-J-P-07} and   references therein.

 In this paper we want to address a different and possibly more basic issue: given observations of the motion of a large set of interacting agents and assuming statistical steady state, find out  whether there is a flocking component in the collective motion and estimate its   characteristics. The rationale for   this search is that the very concept of flocking implies an  {\em orderly motion} which must then admit a much simpler   mathematical description than that of the whole ensemble. Once the     flocking component (if present) has been separated, the motion of the ensemble   splits naturally into flocking plus  a random term which describes local random disagreements  of the individual agents or the effect of external disturbances. Hence extracting a flocking structure is essentially  a parsimonious modeling problem. Prediction of the future behavior and control of a complex  ensemble of random agents could then reasonably  be restricted to the flocking component and be based on the simple model thereof.
  \subsection{Problem statement and scope of the paper}\label{ProblemSect}
We start by setting  notations:  In this paper boldface symbols will normally denote
random arrays, either finite or infinite. All random variables will be real,  zero-mean and with finite variance. In
the following we shall denote by the symbol $H(\vb)$ the standard
Hilbert space of random variables linearly generated by the
scalar components $\{\vb_1,\ldots,\vb_n,\ldots \}$ of a (possibly infinite)
family of random variables which we generically denote  $\vb$. For $\xib,\,\etab \in H(\vb)$, the inner product  is the mathematical expectation $\langle \xib,\,\etab\rangle:= \E \xib \etab$ which induces  the (variance) norm of  random variables by settig $\|\xib\|^2 = \E \xib^2$.   Convergence of random sequences will always be understood with  respect to this norm.

Let $\yb(k,t)$ be a   finite variance random field depending on a discrete space variable $k$ and on a time variable $t$. We  shall denote by $\yb(t)$ the random (column) vector with components $\{\yb(k,t)\,;\, k=1,2,\ldots,N\}$. Suitable mathematical assumptions on this process will be  specified in due time. The variable $k$ is indexing  (space) locations of   a large ensemble of  ``agents''  each of which produces at   time $t$  a scalar measurement, $\yb(k,t)$, of an observable quantity \footnote{An extension of the theory presented in this paper  to a more general setting where each component  $\yb(k,t)$  may  take  vector values, say in $\Rbb^m$,  is  conceptually straightforward, although of course at the price of more complicated notations. For the sake of clarity we shall here restrict to scalar-valued processes.}.
We shall assume that $k$ varies on some ordered  index set of $N$ elements and let $t \in \Zbb$ or $\Zbb_{+}$, depending on the context. Eventually we shall be interested in problems where $N=\infty$. The following is a first attempt to define in precise terms a random flock. The definition is given for a finite ensemble and, as it stands,   may lead to non-unique modeling of the same system, which is instead of paramount importance    in statistical identification  theory. The notion will have to be refined later for an infinite ensemble. See Sections \ref{sect:Aggreg} and \ref{Sec:RandFields}.

A $q$-dimensional {\bf  random  flock} is a random field having the multiplicative  structure  $\hat{\yb}(k,t)= \sum_{i=1}^{q} \,f_i(k) \xb_{i}(t)$, or equivalently,
\begin{equation}
\hat{\yb}(t)=   \sum_{i=1}^{q} \,f_i  \xb_{i}(t)
\end{equation}
where $f_i= \begin{bmatrix} f_i(1) &  f_i(2)& \ldots & f_i(N)\end{bmatrix}^{\top},\, i=1,2,\ldots,q$ are nonrandom $N$-vectors which do not depend on time and $\xb(t):= \begin{bmatrix} \xb_{1}(t) & \ldots&\xb_{q}(t)\end{bmatrix}^{\top}$ is a random processes  depending on the time variable only, which can be chosen with orthonormal components; i.e.
$ \E \xb(t)\xb(t)^{\top}= I_q \,,\q t \in \Zbb\, $.\\
The idea is that a random flock can  essentially be regarded as a rigid  deterministic geometric configuration of $N$ objects (or points) in  space moving synchronously in a random fashion. A very simple  intuitive  picture can be imagined   extending for a moment our model to allow for three dimensional (vector valued)  outputs  $\yb(k,t)\in \Rbb^3\,, k=1,2,\ldots $. The $k-$th agent  can then be visualized as a point  moving in 3-D space. Let $q$ be also equal to three and think of the $3-$dimensional random motion with coordinates  $\xb(t)$ as the  motion of, say, the barycenter of the  ensemble. This implies that all different agents follow the same trajectory in $3$-dimensional space, modulo a constant offset depending on their relative location. In general however the agent's output  does not need to be of the same dimension  of the common state $\xb(t)$.  As already said, here for simplicity we restrict to one dimensional output components.

The above may look like   a rather crude mathematical idealization of  animal flocking behavior especially in that the spatial pattern of flocks of birds or herds of animals etc.  may  also deform its shape in time \cite{Hemelrijk-H-011}. Note however that the deformations may be interpreted as random fluctuations   about an average common trajectory that all agents of the flock aim at and that these fluctuations could in principle  be embodied in the ``noisy component'' $\tilde{\yb}$ of our  modeling scheme.  The denomination of  random  flock  above may be reasonable as a  description of the average behavior of a realistic  flock.
 It should however be said very clearly that our objective in this paper is not to address animal behavior, but rather to discuss dynamical modeling of complex technological systems.

The main goal of  this paper is to  investigate when a second order random field has a flocking component and study the problem of extracting it  from sample measurements of  $\yb(k,t)$.  This means  that one should be searching for decompositions of the type:
\begin{equation}\label{Repr:GFA}
\yb(t)=   \sum_{i=1}^{q} \,f_i  \xb_{i}(t) +\tilde{\yb}(t)
\end{equation}
where $q \geq 1$ and $\tilde{\yb}(t)$ is a ``random noise'' field which should not contain flocking components.
Naturally for the problem to be well-defined one has to specify conditions making this decomposition {\it unique}.

\subsection{Examples}

\subsubsection{Detection of emitters}

In this scenario we suppose there is an unknown number, say $q$, of
emitters, each of them broadcasting radio impulse trains at a fixed common
frequency. Such impulses are received by a large array of $N$ antennas
spread in space. The measurement of each antenna is corrupted by noise,
generated by measurement errors or local disturbances, possibly correlated with that of neighboring antennas. The set up can be described mathematically, by indexing each antenna by an
integer $i=1,2,\ldots, N$  and   denoting by $\yb_i(t)$ the signal
received at time $t$ by antenna $i$. Then, model \eqref{Repr:GFA} can be
used to describe the received signal, with  $\xb (t)$ being the signals sent by the emitters at time $t$, $f_{i}$ coefficients related to the distance between the emitters and antenna $i$ and $\tilde \yb_i(t)$ some disturbance affecting antenna $i$ at time $t$. Hence, we may identify $\sum_{i=1}^{q} \,f_i  \xb_{i}(t)  $ as the flocking component of $\yb(t)$. The goal is to detect the number of emitters $q$ and possibly estimate the signal components impinging on the antenna array. 
Note that in the model there are several hidden (non-measurable) variables, including  the dimension $q$. In our setting   $N$ is assumed to be very large;  ideally we shall assume  $N \rightarrow \infty$.
%
One should  note that estimation of this model from    observations $\{y(t)\}$ of $\yb(t)$ consists first of estimating the model parameters, say $\{f_i\}$ and the covariance matrix of $\tilde \yb(t)$ but also in constructing the hidden random quantities  $\xb(t)$ and $\tilde \yb(t)$. The covariance matrix of $\yb(t)$, say $\Sigma \in \Rbb^{N\times N}$ may be obtained from the data by standard procedures.

 A  problem leading to models of similar structure is automated speaker detection. This  is the problem of detecting the speaking persons (emitters)  in a noisy environment at any particular time, from  signals coming from  a large array of $N$ microphones distributed in a room. Here the number of emitters is generally small but could be varying with time. Robustly solving this problem is useful in areas such as surveillance systems,  and human-machine interaction.

%

\subsubsection{Inference on gene regulatory networks}

 \emph{Transcription factors (TFs)} are proteins which regulate gene
expression binding to specific sequences of the promoter region of a
gene. This regulation brings to the transcription of genes into
mRNAs, which are in turn translated into proteins giving rise to a
complex gene regulatory network.  
In a schematic representation of this network, TF regulation of genes is represented by directed links with a weight
proportional to the strength of the regulation on each gene; possible mutual interactions among genes are also accounted for \cite{pournara_2007,fa_chap_2010}. 
Usually, measurements of the activity of a large number of genes can be collected, but no
information is available about their regulators (i.e., the TFs). Hence, retrieving the TF activity from gene expression data is a challenging problem in gene regulatory networks.

Referring to the model \eqref{Repr:GFA}, we may use the vector $\yb(t)$ to represent the measured
expression profile of the genes of the network in the $t$-th experiment. Describe the TF activity by the variable~$\xb(t)$, the strength of the effect of the TFs on the $i$-th gene of the network by the loading vector $f_i$'s  and the gene mutual influences by the random vector  $ \tilde \yb(t)$,  we   obtain a description of the network as a \virg{flocking component plus local interactions} model. Estimating the flocking component due  to the action $\xb(t)$ of the TF's may constitute a preliminary step towards  understanding  the TF activity in the network.
%

\subsubsection{Modeling energy consumption}

In this example, we may want to model the energy consumption (or production) of a network of $N$ users  distributed geographically in a certain area, say a city or a region. The energy consumption $\yb_i(t)$ of user $i$  is a random
variable which can be seen as  the sum of the two contributions in  \eqref{Repr:GFA},
where the term $f_i^\top  \xb(t)$, the   flocking component of the model,
represents a linear combination of $q$  hidden variables $\xb_i(t)$ which model different factors affecting the energy consumption (or production) of the whole ensemble; say heating or air conditioning consumption related to seasonal climatic  variations, energy production related to the current status of the economy etc. The factor vector $\xb(t)$    determines the average time pattern of energy consumption/production  of each unit, the importance of each scalar factor being determined by    a $q$-ple of constant weight coefficients $f_{i,k}$. The terms $\tilde \yb_i(t)$, represent local  random fluctuations which model the consumption due to  appliances or devices that are usually activated randomly, for short periods of time. They   are assumed  uncorrelated with the process $\xb$. The covariance  $\E \tilde \yb_i(t) \tilde \yb_j(t)$ could  be non-zero for neighboring users but is reasonable to expect that  it decays to  zero when $|i-j|$ is large. \\
To identify such a model one should start from real data  of energy consumption collected from a large amount of units.  A possible application for such a model is the forecasting of the average requirement of energy in a certain geographical area.

\subsubsection{Dynamic modeling in computer vision}

Large-dimensional time series occur often in signal processing applications,   typically  for example,    in computer vision and dynamic image processing. The role of identification  in image processing and computer vision  has been addressed by several authors.  We may refer the reader to the    survey \cite{Chiuso-P-08} for more details and references. One starts  from  a signal $\yb(t) := {\rm vec}(\Ib(\cdot,t))$, obtained by vectorizing at  each time $t$, the    intensities $\Ib(\cdot,t)$ at each pixel of an image, into a vector, say $\yb(t) \in \mathbb{R}^N$, with a ``large'' number (typically tens of thousands) of components.      We may for instance be  interested in modeling (and in identification
methodologies thereof) of
``dynamic textures'' (see   \cite{DynamicTexture}), by  linear state space models or  in extracting classes of models describing   rigid motions of objects of a scene. Most of these models involve   hidden variables, say the state of linear models of textures, or the displacement-angular velocity coordinates of  the rigid motions of objects in the scene. The purpose is of course  to compress high dimensional data into simple mathematical structures. Note
that the number  of samples that can be used for identification  is  very often of
the same order (and sometimes smaller) than the data dimensionality. For instance, in dynamic textures modeling, the number of images in the sequences is of the order of a few hundreds while
$N$ (which is equal to the number of pixels of the image) is
certainly of the order of a few hundreds or thousands \cite{DynamicTexture,BissaccoCSPAMI2007}.

\subsection{Structure of the paper}

The organization of the paper is as follows:
In Section \ref{Sec:FA}  we quickly review static finite-dimensional Factor Analysis; in Section \ref{Sec:AggrIdio} following  the  basic definition  of  of \cite{forni_lippi_2001},   we provide a neat  mathematical characterization of idiosyncratic sequences (Theorem 3.1) which is believed to be new. In the following Section \ref{sect:Aggreg}, based on the characterization of idiosyncratic sequences, the  notion of strong linear independence is introduced and shown to be a crucial ingredient to provide    a unique  representations  by GFA models.  New results characterizing the factor loadings  are presented in this section and related to the condition of diverging eigenvalues from the literature on GFA.  The problem of representation by GFA  models is discussed in Section \ref{GFA}. Here the proposed notion of $q$-aggregate sequence from \cite{forni_lippi_2001} is made concretely operational and a procedure to compute asymptotically the factor and the idiosyncratic components is presented for the first time.  The literature on GFA does not seem to distinguish between GFA representations of a {\em covariance matrix} and of  a {\em random sequence}. However while the first may be unique  there may be quite diverse random components  $\hat{\yb}, \tilde{\yb}$ yielding a GFA representation of the same string $\yb$. To guarantee uniqueness one needs for example to impose that  $\hat{\yb}, \tilde{\yb}$ have components in $H(\yb)$. An interpretations of the two GFA components  in terms of short and long range interaction of a large ensemble of stochastic agents is discussed in Section \ref{SubsectA}.  Also, new necessary and sufficient conditions for  a (weakly) stationary sequence to have a GFA representation are presented at the end of the section in \ref{Sec:Wold}. In Section \ref{Sec:RandFields}, which is believed to be completely original,  dynamic GFA representations of two classes of random fields of interest in applications are discussed. Time-stationary  random fields provide in particular a wide class of linear stochastic models which can describe complex systems arising in a variety of applications in the system and control area.  The extraction of the flocking  component for these systems reduces to the study of an infinite dimensional Lyapunov equation.

Some of the material of this paper has been presented in a preliminary form at  conferences \cite{Bottegal-P-11,Bottegal-P-13}.

\section{A short review of static Factor Analysis models}\label{Sec:FA}

Factor Analysis (FA) has a long history; it has apparently first been introduced by psychologists
and     successively been studied and applied in various branches of Statistics and Econometrics \cite{Ledermann-37,Ledermann-39,Bekker-L-87,Lawley-M-71}. Dynamic versions of factor models  have also  been introduced  in the  econometric literature, see e.g.
\cite{geweke_1977, PenaBox1987,PenaBoxModel2004} and
references therein.
With a few  exceptions however,
\cite{KalmanDFM1983,Kalman-83,Picci-1987,picci_pinzoni_86,Georgiou-N-011,Lipeng-etal-2013},
little attention has been payed to these models in the system and
control engineering community.
 Recently, we have been  witnessing a revival of interest in Factor Analysis, due to the generalization
  proposed  by Chamberlain, Rothschild, Forni, Lippi and collaborators in a series of widely quoted papers \cite{chamberlain_1983,Chamberlain-R-83,Forni_2000,forni_lippi_2001}. This new modeling paradigm is
attracting a considerable attention also in the   system identification community \cite{anderson_deistler_2008,Deistler-Z-07,Deistler-A-F-Z-010,Deistler-etal-2010,PenaJSPI2006}. The   new models, called \emph{Generalized   Factor  Analysis   (GFA)} models, although initially motivated by  financial econometrics   seem to  have a potential to be  useful also in engineering applications.

A classical (static) {\it Factor Analysis\/} model is a representation of $N$ observable random variables $\yb=[\,\yb(1)\,\ldots\,\yb(N)\,]^{\top}$, as linear combinations
of $q$ {\it common factors\/} $\xb=[\,\xb_1\,\ldots\,\xb_q\,]^{\top}$,
plus uncorrelated ``noise" or ``error" terms $\eb=[\,\eb(1)\,\ldots\,\eb(N)\,]^{\top}$ of the type
\begin{equation}
\yb=F\xb+\eb, \label{FA}
\end{equation}
 The columns $\{f_1,\;f_2,\;\ldots,f_{q}\}$ of the matrix
$F$, called the {\it factor loadings} can be chosen linearly independent    and the common factors can be  normalized in such a way  that $  \E \xb\xb^{\top} = I$, which we shall always assume in the following.    An essential part of the model
specification is  that the  $N$ components of the error $\eb$ should
be (zero-mean and) mutually uncorrelated random variables, i.e.
\begin{equation}
 \E  \xb\eb^{\top}=0\,,  \qquad \E
\eb\eb^{\top}=\diag\{\sigma^{2}_1,\ldots,\sigma^{2}_N\}\,.
\end{equation}
The aim of these models is to provide an ``explanation" of the
mutual correlations of the observable variables $\yb(i)$ in terms
of a small number of common factors, in the sense that, setting: $ \hat{\yb}(k):=\sum f_i(k)\xb_i$, where $f_{i}(k)$ is the $k$-th component of $f_i$, one has exactly   $\E \yb(i)\yb(j)=\E \hat{\yb}(i)\hat{\yb}(j)$,
for all $i\neq j$. Note that  a FA representation then induces a decomposition of the covariance matrix $\Sigma$ of $\yb$ as
\begin{equation}
\Sigma= F F^{\top} + \diag\{\sigma_{\eb_1}^2,\ldots,\sigma_{\eb_N}^2\}
\end{equation}
which can be seen as a special kind of \emph{low rank  plus sparse} decomposition of a covariance matrix \cite{Chandra-etal-011}, a diagonal matrix  being, in intuitive terms, as sparse as one could possibly ask for.

Unfortunately these models, although providing in many circumstances a quite natural and
useful data compression scheme,  suffer from a
serious non-uniqueness problem coming from the fact that, even for a fixed dimension $q$ there are in
general many (generally infinitely many) statistically non-equivalent FA models describing the same  family of observables $\{\yb(1),\ldots,\yb(N)\}$. In addition, determining the minimal integer $q$ for which a FA decomposition holds for a given symmetric positive definite matrix $\Sigma$ has been an open problem since the beginning of the last century. It is by
now a well-known fact that estimation of  F.A. models (say minimal models with $F$'s of  rank $q$ and   normalized factors) is an ill-posed problem.
 
  This inherent nonuniqueness    is  called ``factor indeterminacy", or
unindentifiability in the literature and the  often acritical use of FA models has been vehemently criticized by Kalman in a series of papers, see e.g.  \cite{Kalman-83,KalmanDFM1983}. Non uniqueness of the factors is an intrinsic difficulty common to stochastic models with latent variables, whose role is to enforce some    conditional independence relation among the observables.  As a rule the   choice of the latent variables is non unique. 
It is known, see  \cite{Bartholomew-84,Picci-1987,LPBook},  that a factor structure is also  equivalent to  a relation of  conditional independence of the observables given the factors and this  is in turn {\em equivalent to the uncorrelation of the noise components.} \\ 
 One may then try to obtain uniqueness by giving up or by mitigating the requirement of uncorrelation of the components of  $\eb$. Obviously this tends to make the problem  ill-defined  as the basic goal  of uniquely splitting the external observable signal into a noiseless component plus ``additive noise'' is made vacuous, unless some  extra assumptions are made on the model and on the very notion of ``noise''.   Quite surprisingly, as we shall see, for  models describing an {\em infinite } number of observables a meaningful weakening of the uncorrelation property can be introduced, so as to guarantee  the uniqueness of the decomposition.

\section{ Static Generalized Factor Analysis and Idiosyncratic sequences} \label{Sec:AggrIdio}
In this section we shall review  Generalized Factor Analysis    restricting  for now to the {\em static case}.\\
 Consider a  zero-mean   finite
variance   stochastic process $\yb :=\{\yb(k),\,k \in \Zbb_+\}$ represented  as a  column vector with an infinite number of random components.
We  want   to represent $\yb$  by an infinite dimensional Factor Analysis model o fthe form
\begin{equation}\label{GFAD}
\yb(k) = \sum_{i=1}^{q}f_i(k) \xb_i + \tilde{\yb}(k)\,,\qq k=1,2, \ldots
\end{equation}
where, in analogy to  finite-dimensional Factor Analysis,  the random variables $\xb_i\,,\, i=1,\ldots,q$ are the  common factors  and the deterministic vectors $f_i\in \Rbb^{\infty}$ the   factor loadings. The $\xb_i$ form a q-dimensional  random vector  $\xb$  with orthonormal components; i.e.  $\E\;\xb\xb^{\top} =I_q$.   The  $\tilde{\yb}(k)$'s are zero mean  random variables   orthogonal to (uncorrelated with) $\xb$. The linear combinations $ \hat{\yb}(k):=\sum f_i(k)\xb_i\,; \;k=1,2,\ldots$  are the components of  an  infinite random vector denoted $\hat{\yb}$ which, together with  the noise terms $\tilde{\yb}(k)$, give the representation  $\yb =\hat {\yb} + \tilde{\yb}$, a compact way to write the model \eqref{GFAD} in vector form.\\
 Which specific characteristics  qualify the process $\tilde{\yb}$    as ``noise''  is a nontrivial issue  which will be  the main theme  of this section and will lead to the concept  of {\em  idiosyncratic random sequence} below.  The underlying idea can be extracted from the following example.
 \begin{example}\label{Stringof1}
Let ${\mathbf 1}\!\!1$ be an infinite column  vector of $1$'s,   let $\xb $ be a zero-mean scalar random  variable and $ \tilde{\yb}$  a zero-mean  weakly stationary    ergodic sequence uncorrelated with $\xb $. Consider the F.A. model
\begin{equation*}
\yb = {\mathbf 1}\!\!1 \xb + \tilde{\yb}\,
\end{equation*}
and the   sequence of vectors in $\Rbb^{\infty}$
\begin{equation}\label{Aritmean}
a_n = \frac{1}{n}  [\,\underbrace{1 \, \ldots
\,1}_{n}\,0\,\ldots\, ]^{\top}
\end{equation}
 Since $\lim_{n \to \infty}\,a_n^{\top}{\mathbf 1}\!\!1=1$ and  $\lim_{n \to \infty}\,a_n^{\top}\tilde{\yb}= \lim_{n \to \infty}\, \dfrac{1}{n} \sum_{k=1}^n \tilde{\yb}(k) = \E  \tilde{\yb}(k)= 0$ (limit in $L^2$) we have
$
 \lim_{n \to \infty}\,a_n^{\top} \yb  = \xb\,;
$
 hence  we can recover the latent factor  by an operation of averaging. There are in fact more general sequences $a_n$ of infinite   vectors, such that $\lim_{n\rightarrow \infty} a_n^{\top} {\mathbf 1}\!\!1$ exists and is non zero and in addition   $\lim_{n\rightarrow \infty} a_n^{\top} \tilde{\yb} = 0$ for processes like $ \tilde{\yb}$.  These sequences  recover $\xb$ from the FA model uniquely. \hfill $\Box$
\end{example}

The infinite covariance matrix of the vector $\yb$ is formally written as  $\Sigma := \mathbb{E}\;  \yb \yb^{\top} $. Let $\ell^2(\Sigma)$ denote the Hilbert space of
infinite sequences $a:= \{a(k),\,k \in \Zbb_+\}$ such that $\|
a \|^2_\Sigma := a^{\top} \Sigma a < \infty$. When $\Sigma =I$ we use the standard symbol $\ell^2$, denoting the corresponding norm by $\|\cdot\|_2$. The following definition was introduced in \cite{forni_lippi_2001}:

\begin{definition} \label{Idiosync}
{\em A sequence of elements $\{a_n\}_{n \in \Zbb_+} \subset \ell^2 \cap \ell^2(\Sigma)$   is an {\em averaging sequence} (AS) for $\yb$, if
$\lim_{n\rightarrow\infty} \| a_n \|_2 = 0$. \\
  We say that a sequence of random variables $\yb$ is   {\bf idiosyncratic}   if
$\lim_{n\rightarrow\infty} a_n^{\top} \yb = 0$ for any averaging
sequence $a_n \in \ell^2 \cap \ell^2(\Sigma)$.}
\end{definition}
\medskip

Whenever  the covariance $\Sigma$ is a bounded operator on $\ell^2$ one has $\ell^2(\Sigma) \subset \ell^2$;  in this case an AS can be seen just as a sequence of linear functionals in $\ell^2 $ converging strongly to zero. \\
The sequence of elements \eqref{Aritmean} is in $\ell^2$ and 
is an averaging sequence for any $\Sigma$. For a more general class of AS's, let $P_n $ denote the compression of the $n$-th power of the left shift operator to the space $\ell^2$; i.e. $[P_n a] (k)=  a(k-n) $ for $k \geq n$ and zero otherwise. Then $\lim _{n \to \infty}\, P_n a=0$ for all $a\in \ell^2$ \cite{Halmos-61} so that $\{P_n a\}_{n \in \Zbb_+}$ is an AS for any $a \in \ell^2$.
\medskip

The nature of an idiosyncratic sequence  is   related
to certain properties  of  its covariance matrix. To explain  this point, we need to
  introduce some notations and facts about the eigenvalues of sequences of
covariance matrices. 
We let  $\Sigma_n$ indicate  the top-left $n \times n$ block of
$\Sigma$, equal to the covariance matrix of the first $n$
components of $\yb$, the corresponding $n$-dimensional vector being denoted by  $\yb^{n}$. The inequality $\Sigma > 0$  means that all submatrices $\Sigma_n$  of $\Sigma$  are   positive  definite, which we shall always assume in the following.
Letting  $\hat{\Sigma} := \mathbb{E}\hat{\yb} \hat{\yb} ^{\top}= FF^{\top}$ and
$\tilde{\Sigma}  := \mathbb{E}\tilde{\yb} \tilde{\yb} ^{\top}$, the orthogonality   of the noise term and the factor
components implies that
\begin{equation}\label{FADecomp}
\Sigma  = \hat{\Sigma}  + \tilde{\Sigma}\,,
\end{equation}
that is, $\Sigma_n = \hat{\Sigma}_n + \tilde{\Sigma}_n \,,\, 
\forall n \in \mathbb{Z}_+\,$. Even imposing that $\hat{\Sigma}$ should be of low
rank, this is a priori  a highly non-unique decomposition. There are
situations/examples  in which the $\tilde{\Sigma}$ is diagonal as in
the finite-dimensional Factor Analysis case, but these situations are
exceptional.
Denote by $\lambda_{k}(\Sigma_n)$  the k--th
eigenvalue of the $n \times n$ upper left submatrix $\Sigma_n$ of $\Sigma$. The $\lambda_{k}(\Sigma_n)$'s are real nonnegative and will always  be ordered by decreasing magnitude. By Weyl's theorem \cite[p. 203]{Stewart-S-90}, see also   \cite[Fact M]{forni_lippi_2001},     the k--th eigenvalue of $\Sigma_n$ is a non decreasing function of $n$ and hence has a limit, $\lambda_k(\Sigma)$, which may possibly be equal to  $+\infty$. Each such limit is called an {\em  eigenvalue of $\Sigma$}.
These limits however  are in general not true eigenvalues, as it is well-known that $\Sigma$ may not have eigenvalues. For example, a bounded symmetric Toeplitz matrix has a purely continuous spectrum \cite{Hartman-W-54}. Anyway since $\Sigma$ is symmetric and positive, its spectrum lies on the positive half line and its elements  can   be ordered. Henceforth   we  shall denote by $\lambda_1(\Sigma)$ the maximal eigenvalue of $\Sigma$, as  defined  above, with the convention that  $\lambda_1(\Sigma)=+\infty $ when there are infinite eigenvalues. The following Lemma will be instrumental in understanding  the structure of idiosyncratic processes.
\begin{lemma}\label{th:Boundedness}
{\em  A symmetric  matrix $\Sigma$ defines  a bounded operator on $\ell^2$ if and only if $\lambda_1(\Sigma)$ is finite.}
\end{lemma}
\begin{proof}
see the appendix.
\end{proof}
A   characterization of  idiosyncratic sequences is stated in
the following theorem.  The proof will also be given in the appendix.
 \begin{theorem} \label{th:idiosyncratic}
 {\em The sequence $\yb$ is idiosyncratic   if and only if its covariance matrix defines a bounded operator on $\ell^2$.}
\end{theorem}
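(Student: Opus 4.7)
The plan is to prove the theorem by establishing the two implications separately and tying them together with the boundedness equivalence.

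\emph{Equivalence of the two characterizations.} Theorem \ref{th:Boundedness} already gives that $\lambda_{1}(\Sigma)<\infty$ implies $\Sigma$ is a bounded operator on $\ell^{2}$. For the converse, observe that each $\Sigma_{n}$ is the compression of $\Sigma$ to the subspace of sequences supported on the first $n$ coordinates, hence $\lambda_{1}(\Sigma_{n})\le\|\Sigma\|_{\mathrm{op}}$ for every $n$. Passing to the sup in $n$ yields $\lambda_{1}(\Sigma)\le\|\Sigma\|_{\mathrm{op}}<\infty$. So henceforth I can freely use either formulation.

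\emph{Sufficiency: bounded $\Sigma$ implies idiosyncratic.} Let $M:=\|\Sigma\|_{\mathrm{op}}<\infty$ and let $\{a_{n}\}$ be any AS, so $a_{n}\in\ell^{2}\cap\ell^{2}(\Sigma)$ and $\|a_{n}\|_{2}\to 0$. Then
\begin{equation*}
\|a_{n}^{\top}\yb\|^{2}=a_{n}^{\top}\Sigma a_{n}\le M\|a_{n}\|_{2}^{2}\longrightarrow 0,
\end{equation*}
which is exactly the idiosyncratic property.

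\emph{Necessity: unbounded $\Sigma$ implies non-idiosyncratic.} I argue by contrapositive: assume $\lambda_{1}(\Sigma)=+\infty$, i.e. $\lambda_{1}(\Sigma_{n})\to+\infty$, and construct an AS along which $a_{n}^{\top}\yb$ stays bounded away from zero. For each $n$ pick a unit top eigenvector $v_{n}\in\mathbb{R}^{n}$ of $\Sigma_{n}$, set $\mu_{n}:=\lambda_{1}(\Sigma_{n})$, and define
\begin{equation*}
a_{n}:=\mu_{n}^{-1/4}\bigl[v_{n}^{\top}\;0\;0\;\cdots\bigr]^{\top}\in\ell^{2}.
\end{equation*}
Then $\|a_{n}\|_{2}=\mu_{n}^{-1/4}\to 0$, so $\{a_{n}\}$ is an AS (note $a_{n}\in\ell^{2}(\Sigma)$ automatically because each $a_{n}$ has finite support, so $a_{n}^{\top}\Sigma a_{n}=v_{n}^{\top}\Sigma_{n}v_{n}\mu_{n}^{-1/2}$ is finite). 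A direct computation gives
\begin{equation*}
\|a_{n}^{\top}\yb\|^{2}=a_{n}^{\top}\Sigma a_{n}=\mu_{n}^{-1/2}\,v_{n}^{\top}\Sigma_{n}v_{n}=\mu_{n}^{-1/2}\mu_{n}=\sqrt{\mu_{n}}\longrightarrow+\infty,
\end{equation*}
so in particular $a_{n}^{\top}\yb\not\to 0$ and $\yb$ is not idiosyncratic.

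The main conceptual point, and the only place where some care is needed, is the necessity direction: one has to produce an AS witnessing non-idiosyncracy given only the divergence of the truncated top eigenvalues. The trick is to renormalize the top eigenvectors of the finite sections by a power of $\mu_{n}$ chosen so that $\|a_{n}\|_{2}$ vanishes while $a_{n}^{\top}\Sigma a_{n}$ still blows up; any exponent strictly between $0$ and $1/2$ in place of $1/4$ would do, and the $1/4$ choice simply makes the bookkeeping symmetric.
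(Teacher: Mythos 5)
Your proof is correct and follows essentially the same route as the paper: for necessity you renormalize the top eigenvectors of the finite sections $\Sigma_n$ into an averaging sequence whose image has non-vanishing (indeed diverging) variance, exactly as the paper does with the normalization $\lambda_{n,1}(\Sigma)^{-1/2}$ in place of your $\mu_n^{-1/4}$, and for sufficiency you bound $a_n^{\top}\Sigma a_n \le \|\Sigma\|\,\|a_n\|_2^2$, which is what the paper's diagonalization argument amounts to once Theorem \ref{th:Boundedness} is invoked. Your explicit verification of the easy converse (bounded operator implies $\lambda_1(\Sigma)<\infty$ via compressions) is a welcome small addition that the paper leaves implicit.
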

In particular, a white noise  process with uniformly bounded variance is idiosyncratic. This follows since the covariance of a white noise process is a diagonal matrix with uniformly bounded entries and therefore is a bounded operator.   However the notion of idiosyncratic process is much  more general than that of white noise. For example any (weakly) stationary purely non deterministic process with a bounded spectral density is idiosyncratic. See Section \ref{Sec:Wold}.

A test for idiosyncracy of a random sequence can be based on    Lemma \ref{th:Boundedness} whereby  $\yb$ is idiosyncratic if and only if $\lambda_1(\Sigma)$ is finite; this last characterization is due to  \cite{forni_lippi_2001} (where however the characterization in terms of  boundedness of $\Sigma$ was not noticed).

\section{Purely Deterministic  sequences}\label{sect:Aggreg}
The notion of a (discrete-time) {\em purely deterministic}  random sequence, or process,  is well-known, see e.g. \cite{Cramer-61} who originated the terminology for not necessarily stationary processes.
 \begin{definition}
Let $q$ be a finite natural number. A sequence $\yb$ is  {\em  purely deterministic of rank $q$}
(in short $q$-PD)  if $H(\yb)$ has dimension $q$.
\end{definition}
Clearly   a $q$-PD sequence $\yb$ is a (in general non-stationary) purely deterministic process in the classical  sense of the term.  More specifically,  $\yb$ is a $q$-PD random sequence if and only if there are $q$ deterministic infinite column vectors $f_1,\,f_2,\,\ldots f_q$ or, for short, an      $\infty \times q$ matrix $F= \begin{bmatrix} f_1&f_2& \ldots& f_q\end{bmatrix}$, such that
\begin{equation}\label{qPDProc}
 \yb(k)\, = \sum_{i=1}^{q} \, f_i(k)\, \xb_i\,,\quad k\in \Zbb_+\,,\quad \text{or} \quad \yb=F\xb
\end{equation}
for some random variables $ \xb_1 ,\; \ldots ,\; \xb_q $. Without loss of generality, the columns $f_1,\,f_2,\,\ldots f_q$  can  be  assumed to be linearly independent, for otherwise one of them could be expressed as a linear  combination of the others and eliminated. In this case $\{\, \xb_1 ,\; \ldots ,\;   \xb_q\, \} $  can be taken to be an orthonormal basis in $H(\yb)$.

We want to relate this concept to the factor  component of $\yb$, as defined earlier.  The factors will later on be shown to originate the flocking component of $\yb(t)$ in the time varying case. As it stands the  $q$-PD condition is however insufficient to guarantee uniqueness.  Unfortunately it turns out that  there are nontrivial sequences representable in the form \eqref{qPDProc} which are idiosyncratic (or contain idiosyncratic sequences). See the the example below.
\begin{example}\label{Ex3}
Consider a sequence $\yb$ whose $k-$th element is
\begin{equation}
\yb(k) = \lambda^k \xb \quad,\,|\lambda| < 1,
\end{equation}
where $\xb$ is a zero--mean random variable of positive  variance $\sigma^2$.
Clearly, $\yb$ is  1-PD, its spanned subspace
$H(\yb)$ being  the one-dimensional space $H(\xb)$. The covariance
matrix of the first $n$ components of $\yb$  is
\begin{equation}
\Sigma_n = \mathbb{E}\yb_{n}\yb_{n}^{\top} = \sigma^2 \begin{bmatrix}\lambda^2 & \lambda^3 & \ldots & \lambda^{n+1}  \\ \lambda^3 & \lambda^4 & \ldots & \lambda^{n+2} \\ \vdots & \vdots &\ddots& \vdots \\
\lambda^{n+1}&\lambda^{n+2} &\ldots & \lambda^{2n} \end{bmatrix}
\end{equation}
Since $\rank(\Sigma_n) = 1$ for every $n$, we have
\begin{align}
\lambda_1(\Sigma) = \lim_{n\rightarrow \infty} \Tr(\Sigma_n) = \lim_{n\rightarrow \infty} \sigma^2\sum_{k=1}^n
\lambda^{2k} = \frac{\sigma^2\lambda^2}{1-\lambda^2} \,,
\end{align}
thus, in force of Theorem \ref{th:idiosyncratic}, $\yb$ is
idiosyncratic. Hence there are (non-stationary) $q-$PD sequences which are  idiosyncratic.
\end{example}

This is a possibility  which must clearly be  excluded if the decomposition \eqref{GFAD} has to be unique~\footnote{Note for example that Definition 2 in \cite[p.1294]{Chamberlain-R-83} is not enough to guarantee uniqueness.}. The question   is  which properties need to be satisfied by the functions $f_1,\,
f_2,\,\ldots f_q$ for $\yb$ not to be an idiosyncratic  sequence. One necessary condition is easily found: the $f_i$ {\em cannot be in $\ell^2$} since otherwise any sequence of functionals $\{a_n\}$ in $\ell^2$  converging    to zero   would lead to
\begin{equation}
\lim_{n\to \infty} \, a_n^{\top} f_i =0
\end{equation}
so that $\lim_{n\to \infty} \, a_n^{\top} \yb =0$ as well. This is clearly the problem with Example \ref{Ex3}.
\begin{proposition}\label{BddVar}
{\em If $\yb$ is   $q-$PD sequence with a uniformly bounded variance, then the   $f_i$'s  are uniformly bounded sequences; i.e. belong to the space $\ell^{\infty}$. If in addition  $\yb$  is not idiosyncratic the  $f_i$'s  belong to   $\ell^{\infty}$ but cannot belong  to $\ell^2$.}
\end{proposition}
\begin{proof}
The   statement  follows since $\| \yb(k)\|^2\leq M^2$, which is the same as $  \sum_{i=1}^{q} f_i(k)^2 \leq M^2$ implies that $|f_i(k)|\leq M$ for all $k$'s.
\end{proof}
We now discuss conditions in terms of the covariance matrix.
\begin{definition}
A  $q$-PD sequence $\yb$ is  {\em $q$-aggregate} if  $\rank\; \{ \Sigma\}  =q $  and \\ $\lim_{n \to \infty} \lambda_{k}(\Sigma_n) = +\infty$ for $k=1,\ldots,q$. In short, there are only $q$  nonzero eigenvalues of $\Sigma$ which are all  {\em   infinite}.
 \end{definition}
For $q=1$ this condition just means that the (only) column of $F$ has  $\ell^2$-norm equal to infinity.
 \begin{proposition}\label{Prop:uniqueness}
  {\em  A $q$-aggregate sequence  $  \yb$ can  be idiosyncratic only if it is the zero sequence.}
 \end{proposition}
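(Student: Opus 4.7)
The plan is to derive a direct contradiction from the two hypotheses, using Theorem \ref{th:idiosyncratic} as the main tool. Suppose $\hat\yb$ is simultaneously $q$-aggregate and idiosyncratic, and let $\hat\Sigma$ denote its infinite covariance matrix.

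First I would invoke the definition of $q$-aggregate: all $q$ nonzero eigenvalues of $\hat\Sigma$ are infinite, so in particular $\lambda_1(\hat\Sigma) = +\infty$ whenever $q \geq 1$. Next I would invoke Theorem \ref{th:idiosyncratic}, which says that idiosyncraticity of $\hat\yb$ is equivalent to $\lambda_1(\hat\Sigma)$ being finite (equivalently, to $\hat\Sigma$ defining a bounded operator on $\ell^2$). These two conclusions are incompatible unless $q = 0$, i.e.\ unless $\hat\Sigma$ has no nonzero eigenvalue at all. But $\rank \hat\Sigma_n = q = 0$ for every $n$ forces $\hat\Sigma_n = 0$, hence every component $\hat\yb(k)$ has zero variance and is therefore almost surely zero. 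Thus $\hat\yb$ is the zero sequence.

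There is essentially no obstacle: the proposition is a one-line consequence of the eigenvalue characterization of idiosyncratic sequences established earlier in the paper, contrasted with the definition of $q$-aggregate. The only minor care needed is to observe that ``$q$-aggregate'' tacitly covers the possibility $q = 0$ as the degenerate case in which $\hat\Sigma = 0$, and this is precisely the escape clause allowed by the statement.
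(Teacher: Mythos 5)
Your argument is exactly the paper's: contrast the definition of $q$-aggregate (maximal eigenvalue of $\hat\Sigma$ infinite when $q>0$) with the eigenvalue characterization in Theorem \ref{th:idiosyncratic} (idiosyncratic iff $\lambda_1(\hat\Sigma)$ finite), forcing $q=0$ and hence the zero sequence. Your write-up just spells out the degenerate $q=0$ case slightly more explicitly than the paper's one-line proof; there is no substantive difference.
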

 \begin{proof}
This follows trivially  from  Theorem \ref{th:idiosyncratic}. If $q>0$ the maximal eigenvalue of the covariance matrix of $ \yb$ is $+\infty$ by definition.
 \end{proof}
 Hence the condition guarantees some sort of uniqueness of the decomposition \eqref{GFAD}.
Of course the question is under what conditions the $q$ eigenvalues of $\hat{\Sigma}$ may tend to infinity. The notion of strong linear independence introduced  below provides an answer.
\begin{definition}
Let
\begin{equation}\label{eq:condition_tilde}
 \tilde{f}_i^n := f_i^n - \Pi[\, f_i^n \,|\,\mathcal{F}_i^n]
\end{equation}
where $\Pi$ is the orthogonal projection onto the  Euclidean space $ \mathcal{F}^n_i = \Span \{f_j^n ,\,j\neq i \, \} $ of dimension $q-1$.
The vectors $f_i, \,i=1,\ldots,q$ in $\Rbb^{\infty}$ are {\bf strongly linearly independent} if
\begin{equation}\label{eq:condition}
\lim_{n\rightarrow\infty}\|\tilde{f}_i^n\|_2 = +\infty\,\qq \,i=1,\ldots,q\,.
\end{equation}
\end{definition}
\medskip

In a sense, the tails of  two strongly linearly independent vectors in $\Rbb^{\infty}$ cannot get ``too close'' asymptotically.
 \begin{theorem}\label{Thm:Strong}
 {\em Let $\yb$ be a $q-$PD sequence, i.e. let
\begin{equation}
\yb(k)\,   = \sum_{i=1}^{q} \, f_i(k)\, \xb_i\,,\qquad
k\in \Zbb_+\,;
\end{equation}
then $\yb$  is   $q-$aggregate  if and only if, the vectors $f_i, \,i=1,\ldots,q$ are strongly linearly independent.}
\end{theorem}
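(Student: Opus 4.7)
The plan is to reduce the spectral analysis of the infinite-dimensional family $\{\Sigma_n\}$ to that of a single $q\times q$ Gram matrix depending on $n$. Since $\yb(k) = \sum_{i=1}^q f_i(k)\xb_i$ with $\xb$ orthonormal, the truncated covariance factors as $\Sigma_n = F_n F_n^\top$ with $F_n = [\,f_1^n\;\cdots\;f_q^n\,]\in\Rbb^{n\times q}$. The nonzero eigenvalues of $\Sigma_n$ coincide with those of the $q\times q$ Gram matrix $G_n := F_n^\top F_n$ whose $(i,j)$ entry is $\langle f_i^n,f_j^n\rangle$. In particular $\rank\Sigma_n = q$ iff $G_n$ is nonsingular, the smallest nonzero eigenvalue $\lambda_{n,q}(\Sigma)$ equals $\lambda_{\min}(G_n)$, and the largest equals $\lambda_{\max}(G_n)$, so the whole question reduces to the behaviour of the eigenvalues of $G_n$ as $n\to\infty$.

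The pivotal technical ingredient is the classical Schur-complement (or regression) identity
\begin{equation*}
(G_n^{-1})_{ii} \;=\; \frac{1}{\|\tilde f_i^n\|_2^{\,2}}\,,
\end{equation*}
which says that the $i$-th diagonal entry of the inverse Gram matrix is the reciprocal of the squared norm of the residual of $f_i^n$ regressed on the remaining columns, precisely the quantity appearing in \eqref{eq:condition_tilde}. From this one gets the two-sided bound, using $\max_i (G_n^{-1})_{ii}\le\lambda_{\max}(G_n^{-1})\le\Tr(G_n^{-1})$,
\begin{equation*}
\frac{1}{\sum_{i=1}^q \|\tilde f_i^n\|_2^{-2}} \;\le\; \lambda_{\min}(G_n) \;\le\; \min_i \|\tilde f_i^n\|_2^{\,2}\,,
\end{equation*}
which exactly sandwiches the smallest nonzero eigenvalue of $\Sigma_n$ between two quantities built out of the $\|\tilde f_i^n\|_2$.

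From this sandwich both implications fall out. If $\yb$ is $q$-aggregate then $\lambda_{n,q}(\Sigma)=\lambda_{\min}(G_n)\to+\infty$, so the upper bound forces $\|\tilde f_i^n\|_2\to+\infty$ for every $i$, i.e.\ strong linear independence. Conversely, if the $f_i$ are strongly linearly independent then $\sum_i\|\tilde f_i^n\|_2^{-2}\to0$, the lower bound yields $\lambda_{\min}(G_n)\to+\infty$, and since all the other nonzero eigenvalues dominate $\lambda_{\min}(G_n)$ they all diverge as well. Strong linear independence also guarantees $\|\tilde f_i^n\|_2>0$ for $n$ large, hence $G_n$ is invertible and $\rank\Sigma_n=q$ eventually, which matches the requirement built into the definition of $q$-aggregate.

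The main hurdle is not the argument itself but correctly identifying the regression quantity $\|\tilde f_i^n\|_2^{\,2}$ with the reciprocal diagonal entry of $G_n^{-1}$ and then producing a bound on $\lambda_{\min}(G_n)$ that is tight enough in both directions; the trace estimate $\Tr(G_n^{-1}) = \sum_i \|\tilde f_i^n\|_2^{-2}$ is what makes the lower bound on $\lambda_{\min}(G_n)$ work uniformly over the $q$ residuals. A minor point to handle carefully is the requirement $\rank\Sigma_n=q$ for every $n$ in the definition of $q$-aggregate: strong linear independence gives it only for $n$ large, so for small $n$ the rank condition has to be assumed or absorbed into the $q$-PD hypothesis.
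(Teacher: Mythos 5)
Your proof is correct, and it takes a genuinely different route from the paper's. The paper treats the two implications separately: for one direction it assumes some residual norm $\|\tilde f_1^n\|_2$ stays bounded, performs the unit-triangular change of basis $\tilde F^n = F^nT^n$ so that the Gramian block-diagonalizes, and computes $\Tr\bigl[(F^{n\top}F^n)^{-1}\bigr]$ to exhibit an eigenvalue of $F^{n\top}F^n$ that remains bounded, contradicting $q$-aggregateness; for the converse it fixes $n_1$, writes $F^{n_2\top}F^{n_2}=F^{n_1\top}F^{n_1}+F^{n_1,n_2\top}F^{n_1,n_2}$, expands the $q$-th eigenvector of the left-hand side in eigenvectors of $F^{n_1\top}F^{n_1}$, and shows $\lambda_q^{n_2}\geq\lambda_q^{n_1}+c$ for a suitable $n_2$, so that $\lambda_q^n\to\infty$ by iterating over segments. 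You replace both arguments by the single identity $(G_n^{-1})_{ii}=\|\tilde f_i^n\|_2^{-2}$ for the Gram matrix $G_n=F_n^\top F_n$ (whose nonzero spectrum is that of $\Sigma_n$), which yields the two-sided bound $\bigl(\sum_{i}\|\tilde f_i^n\|_2^{-2}\bigr)^{-1}\leq\lambda_{\min}(G_n)\leq\min_i\|\tilde f_i^n\|_2^{2}$: the upper bound gives ``$q$-aggregate $\Rightarrow$ strong linear independence'' and the lower bound the converse, in one stroke. This buys symmetry between the two directions, quantitative control of the smallest nonzero eigenvalue, and no eigenvector bookkeeping over the segments $[n_1,n_2]$; in fact the paper's trace computation is essentially the $i=1$ instance of your Schur-complement identity (the $(1,1)$ entry of $T^n(\tilde F^{n\top}\tilde F^n)^{-1}T^{n\top}$ is exactly $\|\tilde f_1^n\|_2^{-2}$), so the two proofs share the same kernel, with yours organized more efficiently. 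Your closing caveat is well taken: strong linear independence makes $G_n$ invertible only for $n$ large, and the requirement $\rank\Sigma_n=q$ for all $n$ in the definition of $q$-aggregate must be read accordingly (the paper's own proof is equally silent on small $n$); note that $\|\tilde f_i^n\|_2$ is nondecreasing in $n$, so ``eventually positive, hence eventually invertible'' is automatic once each residual is nonzero for some $n$.
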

The proof is given in   Appendix \ref{App:C}.

\begin{example}
Consider the $2-$PD sequence $\yb(k) := \sum_{i=1}^{2} \, f_i(k)\, \xb_i$, with
$$
f_1(k) = 1 \quad \mbox{for all } k \,,\quad\quad f_2(k) = 1 - \left(\frac{1}{2}\right)^k
$$
It is not difficult to check  that this sequence does not satisfy condition \eqref{eq:condition}. We shall show that this sequence is not  2-aggregate. The   Gramian matrix of the functions $f_1,f_2$
restricted to  $[1,\, n]$ is
$$
F^{n\top} F^n = \begin{bmatrix} \|f_1^n\|_2^2 & \langle f_1^n,\,f_2^n \rangle_2 \\
\langle f_1^n,\,f_2^n \rangle_2 & \|f_2^n\|_2^2 \end{bmatrix}
$$
and it can be  seen that as $n \rightarrow \infty$, the second eigenvalue converges to
$\frac{5}{3}$. Hence one eigenvalue of the covariance matrix of $\yb$ is finite and the sequence is not 2-aggregate. \hfill$\Box$
\end{example}

\section{GFA representations: Existence and uniqueness }\label{GFA}
Summing up  Theorem \ref{th:idiosyncratic}, Theorem \ref{Thm:Strong} and the uniqueness result in Proposition \ref{Prop:uniqueness}
we obtain conditions on the covariance $\Sigma = \E \yb \yb^{\top}$ to   describe processes admitting a GFA representation.
\begin{definition}
The covariance $\Sigma$ has a {\em GFA decomposition of rank $q$}   if it can be decomposed as the sum of a  matrix  $\tilde{\Sigma}$ which is a bounded operator  in $\ell^2$, and a  $\rank\; q$ perturbation $\hat{\Sigma}=F F^{\top}$, namely
\begin{equation}\label{CovGFA}
            \Sigma=F F^{\top} + \tilde{\Sigma}\,,\qq \text{with} \q F= \bmat f_1&\ldots &f_q\emat\,,\;\; f_i \in \Rbb^{\infty}
\end{equation}
where $F\in \Rbb ^{\infty \times q}$ has strongly linearly independent columns.
\end{definition}
 \begin{theorem} \label{CovCond}
 The infinite covariance matrix $\Sigma$ has a GFA decomposition of rank $q$   if  and only if   for $n\to \infty$, $\Sigma_n$ has $q$   unbounded eigenvalues   and $\lambda_{q+1}(\Sigma_n)$  stays bounded as $n\to \infty$. A GFA decomposition of $\Sigma$    is unique, modulo right multiplication of $F$ by a $q \times q$ orthogonal matrix.
\end{theorem}
This result is close to
Chamberlain and Rothschild \cite[Theorem 4]{Chamberlain-R-83} where it is  obtained via a quite different and rather lengthy series of arguments.

Note that there may well be  sequences (of positive symmetric) $\Sigma_n$ for which {\em all eigenvalues}  tend to infinity. In this case there is no GFA decomposition.    When it applies, the criterion can be seen as a limit of the well-known rule of separating ``large''  from ``small'' eigenvalues in Principal Components Analysis (PCA).   Let $f^{n}_{i} \in \Rbb^{n}\,;\, i=1,\ldots,q$ be the eigenvectors     corresponding to the   $q$  (ordered) eigenvalues of $\Sigma_n$ which increase without bound when $n\to \infty$. We normalize these eigenvectors in such a way that   $F_n:= \bmat f^{n}_{1} & \ldots & f^{n}_{q}\emat$ yields $\hat{\Sigma}_n = F_nF_n^{\top}$. Then
\begin{equation}
\lim_{n\to \infty} F_nF_n^{\top}= F F^{\top}\,.
\end{equation}
See \cite[Theorem 4 (ii), p. 1299] {Chamberlain-R-83} for a proof and a discussion. Then convergence of $\{F_n\}$ can be interpreted as column space convergence in the gap metric, see \cite[p. 260] {Stewart-S-90}.  Although the usual  orthogonality of the  $f^{n}_{i}$  in PCA does not make sense in infinite dimensions as the limit eigenvectors do not belong to $\ell^2$, one may however interpret the strong linear independence condition   as a limit of the orthogonality holding for finite $n$. Hence we can (asymptotically)  get $q$ and $F$ by a limit  PCA procedure      on the sequence  $ \Sigma_n$.

 Trivially, if a random sequence $\yb$ admits a GFA representation then its covariance matrix has a GFA decomposition. On the other hand, assume we are given a GFA decomposition $\hat{\Sigma}+\tilde{\Sigma}$ of an infinite  covariance $\Sigma$.
 How do we find the hidden variables in the representation $\yb= F\xb +\tilde \yb$?\\
 This question has also to do with uniqueness of the representation as there may be several non-equivalent choices of $\xb$ and $\tilde \yb$ compatible with a GFA decomposition of $\Sigma$. We shall show that there is an essentially unique choice, under  the constraint   that both $\xb$ and $\tilde{\yb}$ belong to $H(\yb)$. Models of this kind are called {\em internal} in stochastic realization.
 The following definition  from \cite{forni_lippi_2001}  is  meant to generalize  the phenomenon described in Example \ref{Stringof1}.

\begin{definition}
  Let $\zb \in H(\yb)$. The random variable $\zb$ is an {\em aggregate (of $\yb$)} if there exists
an AS $\{a_n\}$  such that $\lim_{n\rightarrow\infty}  a_n^{\top}\yb =\zb$.
The set of all aggregate random variables in $H(\yb)$   is a closed subspace denoted by $\mathcal{F}(\yb)$ called  the {\em aggregation   subspace} of $H(\yb)$.
\end{definition}
Clearly,  if $\yb$ is an idiosyncratic sequence then
$\mathcal{F}(\yb) = \{0\}$. One can then define an orthogonal decomposition of the type
\begin{equation}
\yb = \mathbb{E}[\yb \mid \mathcal{F}(\yb)] + \ub \,,
\end{equation}
where $\mathbb{E}[\,\cdot \,\mid \mathcal{F}(\yb)]$ is the orthogonal projection operator onto the subspace $\mathcal{F}(\yb)$, so that all components $ \ub(k)$ are uncorrelated with
$\mathcal{F}(\yb)$. The idea behind this decomposition is that, in
case $ \mathcal{F}(\yb)$ is finite dimensional, say generated by a
$q$-dimensional  random vector $\xb$,  one may naturally identify $\ub$ as the idiosyncratic component and capture  a
unique  decomposition of $\yb$ of the type \eqref{GFAD}. This intention is probably behind the analogous decomposition in \cite{forni_lippi_2001} but this idea cannot be pursued further unless some further  technical requirements    are imposed, which are so far unknown.  There may be pathological situations in which $ \mathcal{F}(\yb)$ is finite dimensional, or in which  $\mathcal{F}(\yb) = \{0\}$, but the process $\ub$  is not idiosyncratic.
Theorem \ref{MainThm} below asserts that  in the special case of    stationary sequences,  the construction works if and only if its spectral density is in $L^{\infty}$.

 \begin{proposition}\label{Prop:Constr}
  Assume that its covariance matrix  $\Sigma$ has     a GFA decomposition of rank $q$. Then $\yb$ has a   GFA representation with $q$ factors where both $\xb$ and $\tilde \yb$ have components in $H(\yb)$.
 \end{proposition}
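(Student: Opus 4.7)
The plan is to build $\xb$ as a vector of aggregates of $\yb$ via a limit of principal components, and to take $\tilde \yb$ as the orthogonal residual.

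First I would invoke Theorem \ref{CovCond}: since $\Sigma$ admits a GFA decomposition of rank $q$, the top $q$ eigenvalues $\lambda_{n,1}(\Sigma) \ge \cdots \ge \lambda_{n,q}(\Sigma)$ of $\Sigma_n$ diverge while $\lambda_{n,q+1}(\Sigma)$ stays bounded. Let $u_i^n \in \Rbb^n$ be the corresponding orthonormal eigenvectors and define
\[
a_i^n := \frac{1}{\sqrt{\lambda_{n,i}(\Sigma)}}\bmat (u_i^n)^{\top} & 0 & 0 & \ldots \emat^{\top}, \qq i = 1,\ldots,q.
\]
Each $a_i^n$ is an averaging sequence because $\|a_i^n\|_2^2 = 1/\lambda_{n,i}(\Sigma) \to 0$, and the random variables $\xb_i^n := (a_i^n)^{\top}\yb$ have unit variance and are pairwise orthogonal for every fixed $n$. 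The technical core of this step is to show that $\{\xb_i^n\}_n$ is Cauchy in $H(\yb)$; the limit $\xb_i := \lim_n \xb_i^n$ then lies in $\mathcal{G}(\yb) \subset H(\yb)$. The convergence is best phrased through the canonically defined rank-$q$ spectral projection $F_n F_n^{\top}$ (rather than through the individual eigenvectors, which are only determined up to sign and up to rotation on asymptotically coinciding eigenspaces), exploiting the identity $F_n F_n^{\top} \to F F^{\top}$ that underlies the PCA-limit interpretation of Theorem \ref{CovCond} together with the boundedness of $\tilde\Sigma$.

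Next I would identify the loadings. For every standard basis vector $e_k \in \ell^2$, continuity of covariance under $L^2$-convergence yields
\[
\E \xb_i \yb(k) \;=\; \lim_n (a_i^n)^{\top} \Sigma e_k \;=\; \lim_n (a_i^n)^{\top} F F^{\top} e_k \;+\; \lim_n (a_i^n)^{\top} \tilde\Sigma e_k.
\]
Since $\tilde\Sigma$ is bounded, the second limit vanishes (any AS is sent to zero by a bounded operator on $\ell^2$), while the first limit equals $f_i(k)$ by the PCA-limit identification. Setting $\tilde\yb(k) := \yb(k) - \sum_{i=1}^q f_i(k)\xb_i \in H(\yb)$, one obtains $\yb(k) = \sum_i f_i(k)\xb_i + \tilde\yb(k)$ with $\tilde\yb \perp \xb$ by construction. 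The covariance of $\tilde\yb$ equals $\Sigma - FF^{\top} = \tilde\Sigma$, a bounded operator on $\ell^2$, so Theorem \ref{th:idiosyncratic} forces $\tilde\yb$ to be idiosyncratic. Strong linear independence of the columns of $F$ then guarantees that $F\xb$ is a genuine $q$-aggregate component, delivering an internal GFA representation of rank exactly $q$.

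The hard step is the Cauchy claim used to produce $\xb_i$: $L^2$-boundedness is automatic, but one must show actual convergence despite the non-uniqueness of the $u_i^n$. The clean fix is to argue on the spectral projector $F_n F_n^{\top}$ and to verify, using the splitting $\Sigma_n = F_n F_n^{\top} + \tilde\Sigma_n$ and the boundedness $\lambda_{q+1}(\tilde\Sigma) < \infty$, that the bilinear forms $(a_j^n)^{\top}\Sigma a_i^n$ converge to $\delta_{ij}$; via the canonical isometry between $\ell^2(\Sigma)$ and $H(\yb)$, this is precisely the Cauchy criterion needed to close the argument.
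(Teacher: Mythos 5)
There is a genuine gap, and it sits exactly at the step you yourself flag as hard: the existence of the limits $\xb_i=\lim_n (a_i^n)^{\top}\yb$ is never established, and your proposed ``clean fix'' cannot establish it. For each fixed $n$ one has $(a_j^n)^{\top}\Sigma\, a_i^n=\lambda_{n,i}^{-1/2}\lambda_{n,j}^{-1/2}(u_j^n)^{\top}\Sigma_n u_i^n=\delta_{ij}$ \emph{identically}, by orthogonality of the eigenvectors of $\Sigma_n$; so ``verifying that the bilinear forms $(a_j^n)^{\top}\Sigma a_i^n$ converge to $\delta_{ij}$'' is vacuous and carries no information about convergence of $\xb_i^n=(a_i^n)^{\top}\yb$ in $H(\yb)$. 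The Cauchy criterion requires control of the cross-index quantities $(a_i^n)^{\top}\Sigma\, a_i^m\to 1$ as $n,m\to\infty$, and this is precisely where the non-uniqueness of the $u_i^n$ bites: sign flips and rotations inside the (possibly asymptotically degenerate) divergent eigenspaces can make $(a_i^n)^{\top}\Sigma a_i^m$ fail to converge, so without an explicit alignment of eigenvectors across $n$ the sequence $\xb_i^n$ need not converge at all. Passing to the spectral projector does not rescue the extraction of \emph{individual} factors: $F_nF_n^{\top}\to FF^{\top}$ determines the limiting loadings only up to a right orthogonal transformation, so your subsequent identification $\lim_n (a_i^n)^{\top}FF^{\top}e_k=f_i(k)$ ``by the PCA-limit identification'' is likewise unjustified as stated. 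The remaining steps (orthogonality of the residual, boundedness of its covariance, idiosyncrasy via Theorem \ref{th:idiosyncratic}) are fine, but only conditionally on limits and loadings you have not produced.

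The paper avoids this obstruction by never touching the eigenvectors of $\Sigma_n$: it takes the columns $f_i$ of the \emph{given} loading matrix $F$, Gram--Schmidt orthogonalizes their truncations to get $g_i^n$ with $\|g_i^n\|_2\geq\|\tilde f_i^n\|_2\to\infty$ (this is where strong linear independence enters), and uses the averaging sequences $a_{i,n}:=\|g_i^n\|_2^{-2}\,[\,g_i^n(1)\;\ldots\;g_i^n(n)\;0\;\ldots\,]^{\top}$, which satisfy $a_{i,n}^{\top}f_i=1$ and $a_{i,n}^{\top}f_j=0$ for $j<i$ exactly, so that $A_nF$ tends to a unit upper triangular $R$ and the limiting vector $\zb$ is anchored to the prescribed loadings; the idiosyncratic contribution is killed because the $a_{i,n}$ are AS's and $\tilde\Sigma$ is bounded. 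If you want to salvage your PCA route you would have to align the $u_i^n$ with the fixed columns $f_1^n,\ldots,f_q^n$ across $n$, at which point you are essentially rebuilding the paper's construction.
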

 \begin{proof}
   By a standard Q-R factorization we can orthogonalize the columns of $F_n$,
\begin{align}
 \bmat f_1^{n} &f_2^{n}& \ldots&f_q^{n}\emat  =& \\ 
 &\!\!\!\!\!\!\!\!\!\!\!\!\!\!\!\!\!\!\!\!\!\!\!\!\!\!\!\!\!\!\!\!\!  \bmat  g_1^n & g_2^n & \ldots & g_q^n\emat \bmat 1& r_{1,2}& r_{1,3}&  \ldots & r_{1,q}\\
                                    0 & 1 & r_{2,3}& \ldots & r_{2,q}\\
                                      0 & 0 &1 & \ddots & r_{3,q} \\
                                    \dots & \dots & \dots &\ddots &\dots \\
                                     0 & 0 & 0& \ldots & 1\emat  \nonumber
\end{align}
which we shall write compactly as
\begin{equation}\label{Q-Rfact}
F_n = Q_n R_n
\end{equation}
where $Q_n := \begin{bmatrix} g_1^n &  g_2^n & \ldots &  g_q^n  \end{bmatrix}$    has orthogonal columns. It is well-known that each $g_{i}^n$  can be obtained by a sequential Gram-Schmidt orthogonalization procedure as the difference of  $f_i^{n} $ with its projection onto the subspace $ \Span \{f_j^n ,\,j< i \, \}   \subset   \mathcal{F}^n_i $. Hence $\|g_{i}^n\| \geq \|\tilde{f}_i^{n}\|$ and hence, by assumption, tends to $\infty$ when $n\to \infty$.\\
Next, define
 \begin{equation}
 a_{i,n}^{\top}:= \Frac{1}{\| g_i^n\|_2^2}\, \bmat  g_i^n(1) &  g_i^n (2) & \ldots& g_i^n(n) & 0& \ldots\emat
 \end{equation}
 where the $g_i^n$'s are as defined above. Since $\|g_i^n\|_2 \to \infty$ with $n$, we have    $\| a_{i,n}\|_2= 1/\|g_i^n\|_2 \, \rightarrow 0$ as $n \to \infty$. Hence $a_{i,n}$ is an AS.\\
 Note that we can express each $f_i^{n}$ as $ f_i^{n} = g_i^n +\sum_{j=1}^{i-1} r_{j,i}  g_j^n$  so that
 \begin{equation}
  a_{i,n}^{\top} f_i=  \Frac{1}{\| g_i^n\|_2^2}\,\| g_i^n\|_2^2 \,=\,1
 \end{equation}
  for all $n$ large enough and by a similar calculation one can easily check that $a_{i,n}^{\top} f_j=  0 $, for all $j<i$. With these $a_{i,n}$ construct a sequence of $q \times \infty$ matrices
 \begin{equation}
 A_n := \bmat a_{1,n}^{\top} \\ \dots \\  a_{q,n}^{\top} \emat
 \end{equation}
 which  provides     an asymptotic left-inverse of $F$, in the sense that $\lim _{n \to \infty}\, A_n F= R$, where $R $ is the limit of a sequence of   $q\times q$ matrices  all of which are upper triangular  with ones on the main diagonal. Next,  define  the random vector  $\zb_n := A_n \yb$ which  converges  as $n \to \infty$ to a $q$-dimensional  $\zb$ whose   components must belong to $ \mathcal{F}(\yb)$.  These $q$ components form in fact a basis for $ \mathcal{F}(\yb)$ as the covariance $\E \zb_n \zb_n^{\top}$ converges to $RR^{\top}$ which is non singular. From this, one can easily get  an orthonormal basis $\xb$,  in $H(\hat \yb)$. Hence, since $F$ is known, we can form $\hat \yb = F \xb$ and letting $\tilde \yb:= \yb - \hat \yb$ does yield a GFA representation of $\yb$ inducing the given GFA decomposition of $\Sigma$. Uniqueness is then guaranteed in force of Proposition \ref{Prop:uniqueness}.
 \end{proof}
 \subsection{Interpretation: Short and long distance interaction}\label{SubsectA}
 Imagine a scenario of    an ensemble of infinitely many agents distributed in space interacting   randomly, producing as output measurements  the random variables $ \yb(k)= \hat{\yb}(k) + \tilde{\yb}(k)\;;\; k=1,2,\ldots$.\\
  The covariances $  \tilde\sigma(k,j)= \E\tilde\yb(k) \tilde\yb(j)\, $ measure  the mutual correlation of  the idiosyncratic  fluctuations of neighboring  agents   $\tilde\yb(k),\,\tilde\yb(j)$   located in positions $k$ and $j$. Since $ \tilde{\Sigma}$ is a bounded operator in $\ell^2$, it is a known fact \cite[Section 26]{Akhiezer-G-61} that  $ \tilde\sigma(k,j)\rightarrow 0$ as $|k-j|\rightarrow \infty$ so, in a  sense the idyosincratic component $\tilde\yb$ of a GFA representation models only {\em short range } interaction among the agents, as $\tilde\sigma(k,j)$ is decaying to zero when the   distance $|k-j|$ tends to infinity.

   Whenever an  ensemble can be described by an idiosyncratic sequence, then agents which are far away from each other essentially do not resent of mutual influence.   The statement    holds in general, for every GFA model, although the  decay of the elements $  \tilde\sigma(k,j)$ may be faster depending on the     particular covariance structure. Just the opposite will be  true for the sequence $\hat {\yb}$.

  On the other hand,   $\E \hat{\yb}(k)\hat{\yb}(j)= \sum_{i} f_i(k)f_i(j)$ and  the elements of the column vectors  $f_i  $ cannot be in $\ell^2$. In particular, as stated in Proposition \ref{BddVar}, $f_i \in \ell^{\infty}$ when the  variances of the random variables $\yb(k)$ are uniformly bounded.\\
In any case,  since the components $f_i(k)$ do not decay with distance, the products $f_i(k)f_i(j)$ generically cannot vanish when $|k-j|\rightarrow \infty$. Therefore     the factor loadings describe  ``long range'' correlation and the aggregate component $\hat{\yb}$ of $\yb$ can be interpreted as variables modeling    {\em long range interaction} among the agents. In this sense $\hat{\yb}$ models an average  {\em collective behavior}  of the ensemble. This is in fact the core of the flocking structure that  will emerge  as soon as the $\xb_i$ are allowed to depend on time.

\subsection{The case of stationary sequences}\label{Sec:Wold}
The characterizations of GFA models discussed   so far are for general second order sequences, that is for processes $\yb$ which may well be non-stationarity with respect to the cross  sectional (space) index $k$.  Much sharper results hold in  the special case in which the  sequence  $\yb$
  is   (weakly) stationary; i.e. $\mathbb{E}\yb(k) \yb(j) =\sigma(k-j)$ for $k,j \geq 0$. A complete analysis of this case cannot be presented here and can be found in \cite{Bottegal-P-Arkiv13}. Here we shall just report the main result.

   Let $H_{k}(\yb)$ be the closed linear span of all    random variables $\{\yb(s)\,;\, s \geq k\}$. Introducing the {\em remote future subspace} of $\yb$:
\begin{equation}\label{InfFut}
H_{\infty}(\yb) = \bigcap_{k \geq 0} H_{k}(\yb)\,,
\end{equation}
   the sequence of orthogonal wandering subspaces $E_k := H_{k}(\yb) \ominus H_{k+1}(\yb)$ and their orthogonal direct sum
$
\check{H} (\yb)= \bigoplus_{k\geq 0}\, E_k\,,
$
 it is well known, see e.g. \cite{Doob-53,Halmos-61,Rozanov-67}, that  one has the orthogonal decomposition
\begin{equation}\label{InfFut2}
 \yb  =\hat {\yb} + \check{\yb}\,,\qquad \hat {\yb}(k) \in H_{\infty}(\yb)\,\q \check{\yb}(k) \in \check{H} (\yb)
\end{equation}
for all $k \in \Zbb_{+}$, the component $ \hat {\yb}$ being the purely deterministic (PD), while $\check{\yb}$  the purely non-deterministic (PND) components. The two sequences are orthogonal and uniquely determined.
Furthermore, it is well known that $\check{\yb}$ has an absolutely continuous spectrum with a spectral density function, say $S_y(\omega)$ satisfying the log-integrability condition $\int \log S_y(\omega)\, d\omega> -\infty$, while the spectral distribution of $\hat {\yb}$  is singular  with respect to Lebesgue measure (for example consisting only of jumps) possibly together with   a   spectral density such that $\int \log S_y(\omega)\, d\omega= -\infty$, compare e.g. \cite{Rozanov-67}.

\begin{theorem}\label{MainThm}
 {\em   Assume that $\yb$ is a stationary sequence with   $\dim
H_\infty(\yb)~< ~\infty$ and an a.e. bounded spectral density. Then $H_\infty(\yb) \equiv \mathcal{F}(\yb)$.

  A stationary sequence    admits a unique internal GFA representation \eqref{GFAD} with $q$ factors if and only if it has a bounded spectral density and  the remote future space is of dimension $q$. The aggregate component  $\hat \yb$ is the purely deterministic component of $\yb$ while the idiosyncratic  $\tilde \yb$ is the purely non-deterministic component.}
\end{theorem}

 Note that  there are stationary processes with a finite dimensional  remote future space, whose PND component has  an
unbounded spectral density. It follows from Szeg\"o's theorem that $\tilde \Sigma$ is an unbounded operator and these processes are neither aggregate nor idiosyncratic.

In the   papers \cite{Chamberlain-R-83,forni_lippi_2001}, stationarity with respect to the cross-sectional  index is not assumed. However without stationarity, there may be random sequences which fail to satisfy the eigenvalue conditions of Theorem \ref{CovCond} and do not admit a generalized factor analysis representation.   A precise characterization of  which class of non-stationary sequences   admits  a GFA representation seems still to be an open problem.

\section {Dynamic GFA models}\label{Sec:RandFields}
We come back to dynamic modeling and to the question raised in section \ref{ProblemSect} namely  when does a second order random field have a flocking component. We shall initially restrict to the case of processes which are stationary with respect to the time variable which  is a natural assumption to make in view of statistical inference.

A time-dependent family $\yb:= \{\yb(t)\,;\, t \in \Zbb \}$ of infinite-dimensional  zero-mean random vectors, $\yb(t)$,  whose covariance matrix, $\Sigma(\tau) := \E \yb(t+\tau) \yb(t)^{\top}$  is  (finite and) independent of $t$, will be called   a {\em time-stationary (second order) random field}.  The following  definition extends and makes precise the finite-dimensional concepts introduced at the end of  Section \ref{Introd}.

\begin{definition}
We shall say that a time-stationary random field has a {\em dynamic GFA representation} of rank $q$ if it can be written as
\begin{equation}\label{DGFA}
\yb(t)= F \xb(t) + \tilde{\yb}(t)
\end{equation}
where $F \in \Rbb^{\infty \times q}$ has strongly linearly independent columns and
$ \tilde{\yb}(t)$ is an {\em idiosyncratic random field}; i.e   the covariance matrix $ \tilde {\Sigma}(\tau):= \E \tilde{\yb}(t+\tau) \tilde{\yb}(t)^{\top}$ defines, for all $\tau\in \Zbb$, a bounded linear operator in $\ell^2$. The $q$ dimensional {\em factor process}   $\xb(t)$ and $ \tilde{\yb}(t)$  are jointly stationary and uncorrelated, that is
$$
\E\xb_i(t)\,\tilde{\yb}_j(t) = 0\,,\qquad i=1,\ldots,q;\; j= 1,2, \ldots, \q t \in \Zbb\,.
$$
Without loss of generality,    $\xb(t)$  can be chose with orthonormal components; i.e.   $\E \xb(t)\xb(t)^{\top}=~I_q$.
 \end{definition}
 \begin{proposition}\label{DynStationary}
  The stationary random field $\yb:= \{\yb(t)\,;\, t \in \Zbb \}$ has a dynamic GFA representation \eqref{DGFA} if and only if  $\yb(0)$   has a static GFA represenattion with  the same factor loading matrix $F$,  $\xb\equiv \xb(0)$ and $ \tilde{\yb} \equiv \tilde{\yb}(0)$.
\end{proposition}
\begin{proof}
 The proof of the direct implication is trivial. The converse  is proven in Appendix~\ref{App:E}.
\end{proof}
Incidentally, the proposition guarantees uniqueness of the dynamic representation \eqref{DGFA}.  The following criterion for the existence of a flocking structure in a time-stationary random field   follows directly from Theorem  \ref{CovCond} and the proposition above. 
\begin{corollary}
{\em For a time-stationary random field, a flocking structure exists with  $q$ factors    if  and only if $q$ eigenvalues of the steady state covariance matrix $\Sigma_n$ of the  $n$-dimensional random subvector  $\yb^n(t)$ of $\yb(t)$, tend to infinity with $n$  while the others remain bounded.}
\end{corollary}

We shall study a  class of   random fields   described by linear evolution equations of the general form
\begin{equation}\label{LinEvol}
\yb(t+1)= A\yb(t) +\wb(t)\,\; \qquad t\in \Zbb
\end{equation}
where $\wb$ is a string of uncorrelated stationary white noise processes and $A$ is an infinite matrix (a linear operator) acting on infinite sequences. We assume that the evolution is  stationary in time so that the steady state covariance matrix of $\yb(t)$ is a constant positive definite matrix $\Sigma$, which should  satisfy an infinite dimensional Lyapunov equation
\begin{equation}\label{LYAP}
\Sigma =  A \Sigma  A ^{\top} +Q
\end{equation}
where   $Q$ is the variance matrix of the white noise which we assume  an infinite  diagonal matrix with uniformly bounded positive entries (it is actually no loss of generality assuming that $Q$ is the identity matrix).  In this case, a GFA model of $\yb$  will also be stationary and the  structure of the model can be inferred by analyzing the covariance matrix $\Sigma$. To this end   consider the   $n$-dimensional random sub-processes $\yb^n(t)$ of $\yb(t)$, obeying the  equation 
\begin{equation} \label{eq:basic}
\yb^n(t+1) = A_n \yb^n(t) + \wb^n(t) \,,\quad n=1,2, \ldots
\end{equation}
where  $A_n$ is the upper left $n \times n$ submatrix of $A$ and the input process
$\wb^n(t)$ is the $n$-dimensional white noise with  variance  $ \mathbb{E} \wb^n(t) \wb^n(s)^\top = Q_n\delta_{t,s} $, $Q_n$ being the the upper left $n \times n$ submatrix of $Q$, and  study the behavior  as $n\to \infty$ of   the  covariance matrix of $\yb^n(t)$, solution to the family of Lyapunov equations
\begin{equation} \label{eq:fam_lyap}
\Sigma_n = A_n\Sigma_nA_n^\top + Q_n\,\qquad n=1,2,\ldots\;.
\end{equation}
The existence of a flocking component  can be  addressed by analyzing the asymptotics   of $ \Sigma_n$ when   $n\to \infty$. 
Some    families of matrices $\{A_n\}_{n \in \mathbb{N}}$ are considered below.

   \subsection{Autonomous agents}
In this scenario, the behavior of each agent is independent of  the others, being just an autoregressive motion of the type
\begin{equation}
\yb_k(t+1) = a_k \yb_k(t) + \wb_k(t)  \quad, \quad \sup_{k\in \mathbb{N}} |a_k| < 1     \,.
\end{equation}
In this case, $A_n = \diag \{a_1,\,\ldots,\,a_n\}$ and the family of Lyapunov equations \eqref{eq:fam_lyap} admits diagonal (nested) solutions with uniformly bounded elements. Hence, in this case the resulting  sequence is   idiosyncratic noise with uncorrelated components and there is no flocking structure.

\subsection{ Flocking by following a leader}
As discussed in    \cite{Shen-07}, flocking may be observed  in   hierarchical leadership models where the evolution of the   first $n$ agents   influences that of the agents of index $k>n$ but not conversely so the matrix of the operator $A$ has a nested lower triangular structure   of the type
\begin{equation} \label{eq:nested}
A_{n+1} = \begin{bmatrix} A_n & 0 \\ b_n^\top & a_{n+1} \end{bmatrix} \,,
\end{equation}
where  $|a_{n+1}|<1$ to keep the asymptotic stability of $A_{n}$  preserved.  

A very simple instance is the following linear model where each  agent, evolving  with the same scalar random dynamics, wants to follow a leader by applying a  proportional control law based on the measurement of its position with respect to the leader's $\yb_0(t)$:
 \begin{eqnarray*}
\yb_1(t+1) &=& a  \yb_1(t) + \wb_1(t)\,,\qquad |a|<1\\
  \yb_k(t+1) &=& (1-a)  \yb_1(t) +a  \yb_k(t) + \wb_k(t)\,,\, k=2,3,\ldots
\end{eqnarray*}
 The question  is if  following  a leader should, under appropriate circumstances,  produce a random flock. The steady-state covariance matrices of  $\yb^{n}(t)$  solves the Lyapunov equation \eqref{eq:fam_lyap} for the   model  
$$
\bmat \yb_0(t+1) \\ \yb_1(t+1) \\ \dots \\ \yb_n(t+1)\emat = \bmat a& 0 & \ldots&0\\
1-a &a &  & \vdots \\ \vdots & 0 & \ddots& 0\\ 1-a & \ldots& & a \emat \bmat \yb_0(t)\\ \yb_1(t) \\ \dots \\ \yb_n(t)\emat +\bmat  \wb_0(t)\\ \wb_1(t) \\ \dots \\ \wb_n(t)\emat
$$
and it is possible to show  that  indeed  a flocking structure is present. 
  \begin{proposition}\label{thm:leader}
{\em Assume for simplicity   that $Q_n=I_n$. The solution of the Lyapunov equation \eqref{eq:fam_lyap}  tends for $n\to \infty $ to
a covariance matrix of the form $ \Sigma= f f^{\top} + \tilde{\Sigma} $ where $f \in \Rbb^{\infty}$ has components
$$
f_k= \begin{cases} \begin{array} {lcr} a/(1-a^4)^{\frac{1}{2}} & ,& k=1\\ (1+a^2)^{\frac{1}{2}}/[ (1+a)(1-a^2)^{\frac{1}{2}}] & ,& k>1, \end{array} \end{cases}
$$
and $ \tilde{\Sigma}$ is a bounded operator in $\ell^2$. Hence
$$
\yb(t)= f \xb(t) + \tilde{\yb}(t)\,,\quad    \xb(t) = (1-a^4)^{\frac{1}{2}} \yb_1(t-1) \,,  
$$
$$
\Var \tilde{\yb}(t)= \tilde{\Sigma}\,.
$$}
\end{proposition}
The calculations and  the structure of $\tilde{\Sigma}$ are  in Appendix \ref{App:F}. Note that the infinite matrix $A$ does not define a bounded operator on the whole space  $\ell^2$ since the first column is not square summable (it just belongs to $\ell^{\infty}$) the domain being the linear subspace of all sequences in $\ell^2$  having  zero initial symbol.

\subsection{Infinite dimensional distributed average consensus}
Assume that the $k$-th agent adjusts its output  in discrete time by a  symmetric linear relation
 \begin{equation}  \label{AdjustCons}
 \yb_{k}(t+1) = a_k \yb_{k}(t)+\sum_{j\in N_k}a_{k,j}(\yb_{j} (t) - \yb_{k}(t)) + \wb_k(t)\,,
 \end{equation}
  where  $k=1,\,2,\,\ldots$ and the sum is over the set of neighbors $N_k$ of each state $k$, which we assume to be a finite set.   The overall motion can be described as
 \begin{equation} \label{Amatrix}
 \yb(t+1)= A \yb(t) +\wb(t)
 \end{equation}
starting at some initial state $\yb(0)$. Here  $A$ is a matrix with positive elements such that
$$
A=A^{\top}\, \qquad  A {\mathbf 1}\!\!1={\mathbf 1}\!\!1
$$
an infinite doubly stochastic matrix. The state of \eqref{Amatrix} is not stationary since it  has a random walk component. We want to see if for some averaging sequence $\{a_n\}$ the limit
$$
\lim_{n\rightarrow \infty} a_n^{\top}\xb(t)
$$
is non-zero. This would imply the existence of a flocking component. Problems of this kind have been studied in the finite-dimensional setting in \cite{Xiao-B-K-07}. Here we study a slightly different model, obtained by modifying \eqref{AdjustCons}  so as to deal with an infinite number of agents. Let us assume that
\begin{enumerate}
\item for each $n \geq n_0$, where $n_0$ is a fixed initial integer,   the symmetric doubly stochastic matrix  $A_n$,  achieves consensus on the first $n$ agents;
\item  define  a sequence of matrices $\bar A_n := (1- \frac{1}{n})A_n$,   and assume that consensus is reached as $n \rightarrow \infty$.
\end{enumerate}
Denoting by $\bar A$ the limit of the sequence $\{\bar A_n,\, n \in \mathbb N \}$, the following result holds.
\begin{proposition} \label{thm:consensus}
{\em The model  \begin{equation} \label{Amatrix2}
\yb(t+1)= \bar A \yb(t) +\wb(t) \quad ,\, Q = I
\end{equation}
admits a flocking structure. The relative GFA decomposition has one ($q = 1$) latent factor.}
\end{proposition}
The proof is in Appendix \ref{App:G}.

\subsection{Generalizations} 
In both the above examples the matrix $A$ can be decomposed as the sum of a bounded operator in $\ell^2$ plus an unbounded  rank one perturbation in $\ell^{\infty}$. It then happens that the unbounded solution of the Lyapunov equation has exactly the same columnspace and the same rank   as the unbounded perturbation of $A$. Although we don't have  a general  proof, this seems likely to be a general fact.  Let us assume, by way of example, that $A$ is  symmetric and is a direct sum of a finite, rank $q$ perturbation plus a bounded operator in $\ell^2$, defined on $\mathscr{F} \oplus \ell^2$ where $\mathscr{F}$ has dimension $q$. Because of symmetry  these will be  orthogonal  complementary invariant subspaces. For each finite $n$ we   therefore have a block decomposition
$$
A\bmat F & G\emat= \bmat F & G\emat \diag\,\{ \hat{A},\, \tilde{A}\}
$$
where for $n\to \infty$  the $q$ columns of $F$   belong to $\ell^{\infty}$ but not to $\ell^2$  while $G$ is a unitary operator in  $\ell^2$. Writing formally $T:=  \bmat F & G\emat $ we have $A=T \diag\,\{ \hat{A},\, \tilde{A}\} T^{-1}$ and also
$$
A^{k}= T \diag\,\{ \hat{A}^{k},\, \tilde{A}^{k}\} T^{-1}
$$
Then, letting  $T^{-1}Q T^{-\top}:= \diag \,\{ \hat{Q} ,\, \tilde{Q}\}$,  the solution of \eqref{LYAP} can be written
\begin{align}
\Sigma & =  \bmat F & G\emat \diag\,\left\{\sum_{k=0}^{+\infty} \hat{A}^{k} \hat{Q} [\hat{A}^{\top}]^k ,\;  \sum_{k=0}^{+\infty} \tilde{A}^{k} \tilde{Q} [\tilde{A}^{\top}]^k \right\} \bmat F^{\top} \\ G^{*}\emat\nonumber \\
&= F \hat{P} F^{\top} + G\tilde{P} G^{*}
\end{align}
Hence when $A$ has $q$ eigenvectors in $\ell^{\infty}$ (but not in  $\ell^2$) the steady-state covariance has a GFA decomposition. \\
Changing basis in \eqref{LinEvol} by letting $\yb(t) = \bmat F & G\emat \bmat \hat{\xb}(t)\\ \tilde{\xb}(t) \emat$ so that
$$
\bmat \hat{\xb}(t+1)\\ \tilde{\xb}(t+1) \emat= \diag\,\{ \hat{A},\, \tilde{A}\}\bmat \hat{\xb}(t)\\ \tilde{\xb}(t) \emat + \bmat \hat{\wb}(t) \\ \tilde{\wb}(t)\emat
$$
we end up with a GFA decomposition $\yb(t)= \hat{\yb}(t) + \tilde{\yb}(t)$ where the two components 
$$
\hat{\yb}(t)= F\hat{\xb}(t),;\qquad \tilde{\yb}(t)= G \tilde{\xb}(t) 
$$
are the flocking and the idiosyncratic parts of $\yb(t)$. Note that   the noise components  $\hat{\wb}(t) $ and $ \tilde{\wb}(t)$ are mutually uncorrelated and hence so are $ \hat{\yb}(t)$ and $\tilde{\yb}(t)$.

\subsection{Separable space-time processes}

Random fields which are often encountered in geostatistics, hydrology, marine wave models, meteorology and environmental applications, see e.g \cite{Chunsheng-07} and the references therein,
belong to the class of so-called {\em separable space-time processes}
\begin{equation}\label{multRF}
\yb(k,t)= \sum _{i=1}^{m}\vb_{i}(k) \ub_{i}(t)
\end{equation}
represented as  the product of a  space, $\vb(k):= [\vb_{1}(k)\; \vb_{2}(k)\;\ldots \vb_{m}(k)] $,  and time component, $\ub(t):= [  \ub_{1}(t)\; \ub_{2}(t)\; \ldots  \ub_{m}(t)]^{\top}$, both zero mean and with finite variance.   In general one should take $m=\infty$ \cite{Venturi-2010} but finite dimensional approximations are often enough. To discuss these models one should  generalize the static theory in the preceding sections to $m$-vector-valued processes. Although this is quite straightforward, involving no new concepts but just more notations, for the sake of clarity we shall restrain to the scalar case $m=1$. \\
The model \eqref{multRF} needs to be specified probabilistically, as the dynamics of the ``time'' process $\{\ub(t)\}$ may well be space dependent and dually, the distribution of  $\vb(k)$ may be a priori time-dependent.
The following assumption specifies  in probabilistic terms the multiplicative structure \eqref{multRF} of the random field $\yb(k,t)$.
\medskip

 \noindent{\em Assumption:}   The space and time evolutions of $\yb(k,t)$  are {\em multiplicatively uncorrelated}  in the sense that
\begin{equation}\label{condExp}
\E \{  \vb(k_1)  \vb(k_2)\mid  \ub(t_1)\ub(t_2)\}= \E_{\vb}\{  \vb(k_1)  \vb(k_2)\}
\end{equation}
where the first conditional  expectation is made with respect to the conditional probability distribution of $\vb$ given the random variables $\ub(t_1),\,\ub(t_2)$, while the second expectation is with respect  to the marginal distribution of $\vb$.

From  the multiplicative uncorrelation \eqref{condExp} one gets
\begin{align}\label{ProdCov}
&\E\{  \vb(k_1)  \vb(k_2) \ub(t_1)\ub(t_2)\}= \E\{  \vb(k_1)  \vb(k_2)\} \, \E\{ \ub(t_1)\ub(t_2)\} \nonumber \\
& = \sigma_{\vb}(k_1,k_2) \,  \sigma_{\ub}(t_1,t_2)
\end{align}
where   $\sigma_{\vb}$ and $\sigma_{\ub}$ are the covariance functions of the two processes. Hence the covariance function of the random field inherits the separable  structure of the process. If $\vb$ and $\ub$ are jointly Gaussian, the multiplicative uncorrelation property follows if the two components are uncorrelated; namely their joint covariance is separable. This is a structure which is often assumed in the literature, see  \cite{Li-G-S-08} and references therein. Assume now that the space process has a nontrivial GFA representation with $q$ factors
\begin{equation}\label{SpaceGFA}
\vb(k)= \sum_{i=1}^{q}\, f_{i}(k) \zb_i + \tilde{\vb}(k)
\end{equation}
  where $\hat \vb(k):=\sum_{i} f_{i}(k) \zb_i  $ is the aggregate   and $\tilde \vb(k)$ the idiosyncratic  component of $\vb(k)$. Then setting $\xb_{i}(t)=  \zb_i \ub(t)$ and  $ \tilde{\yb}(k,t):= \tilde{\vb}(k)\ub(t)$ one can represent the random field  \eqref{multRF} by a dynamic GFA model,
\begin{equation}\label{TimeGFA}
\yb(k,t)= \sum_{i=1}^{q}\, f_{i}(k) \xb_i(t) + \tilde{\yb}(k,t):= \hat{\yb}(k,t) + \tilde{\yb}(k,t)
\end{equation}
\begin{proposition}
{\em If the processes $\vb$ and $\ub$ are multiplicatively uncorrelated then the two terms $\hat{\yb}(k,t)$ and $\tilde{\yb}(h,s)$  in the GFA model  \eqref{TimeGFA} are   uncorrelated for all $k, h$ and $t,s$. Hence a separable random field satisfying the  multiplicative uncorrelation property has a flocking component if and only if its space process $\vb$  has a nontrivial aggregate component.}
\end{proposition}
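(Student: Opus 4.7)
My plan is to reduce the orthogonality claim to two facts already available: the multiplicative uncorrelation hypothesis \eqref{condExp}--\eqref{ProdCov}, and the built-in orthogonality of the aggregate and idiosyncratic components of $\vb$ in its GFA decomposition \eqref{SpaceGFA}. Substituting the explicit forms $\hat{\yb}(k,t)=\hat{\vb}(k)\ub(t)$ and $\tilde{\yb}(h,s)=\tilde{\vb}(h)\ub(s)$ gives
\[
\E\{\hat{\yb}(k,t)\tilde{\yb}(h,s)\} = \E\{\hat{\vb}(k)\tilde{\vb}(h)\,\ub(t)\ub(s)\},
\]
so the entire argument reduces to splitting this fourth-order moment into a product of a $\vb$-expectation and a $\ub$-expectation, after which the $\vb$-factor will vanish by construction.

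For the splitting, I would first observe that relation \eqref{ProdCov} extends, by bilinearity, from products $\vb(k_1)\vb(k_2)$ to products of arbitrary finite linear combinations of the $\vb(k)$'s, and then by $L^2$-continuity to limits of averaging sequences, hence in particular to each aggregate coordinate $\zb_i$ and thereby to $\hat{\vb}(k)=\sum_i f_i(k)\zb_i$ and $\tilde{\vb}(h)=\vb(h)-\hat{\vb}(h)$. One then obtains
\[
\E\{\hat{\vb}(k)\tilde{\vb}(h)\,\ub(t)\ub(s)\} = \E\{\hat{\vb}(k)\tilde{\vb}(h)\}\,\E\{\ub(t)\ub(s)\},
\]
and the first factor on the right is zero because $\tilde{\vb}(h)$ is orthogonal to every aggregate coordinate $\zb_i$ by definition of the GFA decomposition of $\vb$, hence to any linear combination of them, hence to $\hat{\vb}(k)$. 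This establishes the claimed orthogonality for all $k,h,t,s$.

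For the iff statement, I would appeal to the separable structure together with the uniqueness of the GFA representation (Theorem \ref{CovCond}). For any fixed $t$ with $\sigma_\ub(t,t)>0$, the spatial covariance of $\hat{\yb}(\cdot,t)$ equals $\sigma_\ub(t,t)$ times the covariance of $\hat{\vb}$, so it has exactly $q$ unbounded eigenvalues; similarly $\tilde{\yb}(\cdot,t)$ has a bounded covariance operator on $\ell^2$ and is therefore idiosyncratic. Hence \eqref{TimeGFA} is the canonical GFA decomposition of the spatial sequence $\yb(\cdot,t)$, and its flocking term $\hat{\yb}(k,t)=\hat{\vb}(k)\ub(t)$ is nontrivial precisely when $\hat{\vb}\neq 0$, which is exactly the condition that $\vb$ has a nontrivial aggregate component.

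The delicate step is the rigorous extension of \eqref{condExp} from quadratic forms in $\vb$ to the bilinear form $\hat{\vb}(k)\tilde{\vb}(h)$: passing to $L^2$ limits inside the product with $\ub(t)\ub(s)$ requires a modest moment or continuity hypothesis (finite fourth moments, or joint Gaussianity, in which case \eqref{condExp} reduces to plain uncorrelation of $\vb$ and $\ub$). A cleaner route for a formal write-up would be to recast \eqref{condExp} directly as an orthogonality, in $L^2$ of the product space, of every centered quadratic form in $\vb$ against every centered quadratic form in $\ub$, a property manifestly preserved by $L^2$-closures on each side and therefore applicable verbatim to $\hat{\vb}$ and $\tilde{\vb}$.
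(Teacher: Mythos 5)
Your proof is correct and follows essentially the same route as the paper's: split the mixed fourth-order moment using the multiplicative uncorrelation hypothesis (the paper does it via the tower property, writing $\E \{\E_{\vb} [ \zb_i \tilde{\vb}(h)\mid \ub(t)\ub(s)]\,\ub(t)\ub(s)\}$ and invoking \eqref{condExp}, while you use the factorized form \eqref{ProdCov} extended by bilinearity and $L^2$-continuity), and then annihilate the $\vb$-factor by orthogonality of $\tilde{\vb}(h)$ to $H(\hat{\vb})$. Your explicit justification of extending \eqref{condExp} to the aggregate coordinates $\zb_i$, and your argument for the ``if and only if'' clause via uniqueness of the GFA decomposition, merely spell out steps the paper leaves implicit (its proof ends with ``the last statement then follows directly''), so the underlying argument is the same.
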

\begin{proof}
We have
\begin{equation}
\E  \{ \hat{\yb}(k,t) \tilde{\yb}(h,s)\}=  \sum_{i=1}^{q}\, f_{i}(k) \, \E \{ \zb_i \ub(t)\tilde{\vb}(h)\ub(s)\}
\end{equation}
where the expectation in the last term can be written as
\begin{align}
& \E \{ \zb_i \tilde{\vb}(h)\ub(t)\ub(s)\}= \E \{\E_{\vb} [  \zb_i \tilde{\vb}(h)\mid \ub(t)\ub(s)]\,\ub(t)\ub(s)\} \nonumber \\ 
& = \E \{\E_{\vb} [  \zb_i \tilde{\vb}(h)\,]  \ub(t)\ub(s)\}=0
\end{align}
since the $\zb_i$'s are random variables in $H(\hat{\vb})$ and $ \tilde{\vb}(h)$ is orthogonal to this space. The last statement then follows directly.
\end{proof}
 Here is probably the simplest nontrivial example of decomposition \eqref{TimeGFA}.  
 \begin{example}[Exchangeable space processes]
Consider  a {\em (weakly) exchangeable} space process  $\vb$; i.e. a process whose second order statistics are invariant with respect to all  index permutations  of locations $(k,j)$. Clearly the covariances  $\sigma_{\vb}(k,j)=\E \vb(k)\vb(j)$ must be independent of $k,\,j$ for $k\neq j$ and $\sigma_{\vb}(k,k)= \sigma^2>0$ must be independent of $k$, see \cite{Aldous-85}. Letting $\rho:= \sigma_{\vb}(k,j),\,k\neq j$, one has
\begin{equation}
\Sigma_{\vb}= \bmat \sigma^2& \rho& \rho& \rho &\ldots\\
            \rho & \sigma^2 & \rho& \rho &\ldots\\
            \ldots &          & \ddots&        &\ldots  \emat
\end{equation}
where   $\sigma^2 > |\rho|$ for positive definitness. Letting $f$ denote an infinite column vector with components all equal    to $\rho$, one can decompose $\Sigma_{\vb}$ as
\begin{equation}
\Sigma_{\vb} = f f^{\top} + (\sigma^2 - \rho) I
\end{equation}
where   $I$ denotes an infinite identity matrix. This is a Factor Analysis decomposition of rank $q=1$   of $\Sigma_{\vb}$ with $\tilde {\Sigma}_{\vb}$ a diagonal matrix. Hence a weakly     exchangeable space process is a 1-factor process with an idiosyncratic component which is actually white. In the GFA representation \eqref{SpaceGFA} there is just one factor $\zb$ and the factor loading vector $f$ does not depend on the space coordinate.  \hfill $\Box$ \\
Consider a random field with the multiplicative structure \eqref{multRF}, then the flocking component
$$
\hat{\yb}(k,t)= f \xb(t)\,,\qq \xb(t) =\zb \ub(t)
$$
describes a constant, space independent, configuration moving randomly in time.

\end{example}

\section{Conclusions}
 We have proposed a new modeling paradigm for large dimensional aggregates of random systems based on  the theory of Generalized Factor Analysis. We have  discussed in some depth  static GFA representations and characterized in a rigorous way their properties, especially the nature of  the   {\em idiosyncratic} and  {\em aggregate} components and provided new conditions guaranteeing uniqueness of the representation.  We have shown that the model   splits the output $\yb$ of the system into   two components describing the short- and long- range interaction among  the agents of the ensemble.  For wide-sense  stationary ensembles the nature and existence of these components can be   clarified in the light of the Wold decomposition. For time-dependent evolutions  the aggregate component  provides the core structure of the (random) flocking component.  A detailed analysis of interesting classes of random fields, such as the linear evolution equation in \eqref{LinEvol}, by using the decomposition of the steady state covariance has just been touched upon. Visibly, there is here ample room for further research on specific structures. Also   the   statistical identification had regrettably to be left out and will be considered in forthcoming publications.

 \appendix

\subsection{Proof of Lemma~\ref{th:Boundedness}}
Let $\lambda_1(\Sigma_n)$ be the maximal eigenvalue of $\Sigma_n$.
Since
\begin{equation}
\Sigma_n \leq \lambda_1(\Sigma_n)I_n \leq \lambda_1(\Sigma) I_n
\end{equation}
where $I_n$ is the $n \times n$ identity matrix and   $\lambda_1(\Sigma) <\infty$ by assumption, it follows that for all sequences $x, y \in \ell^2$
\begin{equation}
x^n \Sigma_n y^n \leq \lambda_1(\Sigma) \|x^n\|_2\, \|y^n\|_2\,, \qq n=1,2, \ldots
\end{equation}
Then the result follows from the theorem in \cite[p. 53]{Akhiezer-G-61}.

\subsection{Proof of Theorem \ref{th:idiosyncratic}}
\begin{proof}
Assume first that $\lim_{n\rightarrow\infty} \lambda_{1}(\Sigma_n) =
+\infty$. Since $\Sigma_n > 0$ is symmetric  it has a spectral represenattion
\begin{equation}\label{Diagonaliz}
U^\top_n\Sigma_n U_n = D_n \,,
\end{equation}
where $U_n$ is orthonormal and $D_n = \diag \{\,\lambda_{1}(\Sigma_n),\, \ldots ,\, \lambda_{n}(\Sigma_n) \,\}$.
Consider the first column of $U_n$, say $u_1^n$,
which is the eigenvector of $\lambda_{1}(\Sigma_n)$ and define the
sequence of elements in   $\ell^2 \cap \ell^2(\Sigma)$ constructed as
\begin{equation}
a_n := \frac{1}{\sqrt{\lambda_{1}(\Sigma_n)}}\begin{bmatrix} ( u_1^{n}) ^{\top} & 0 & \ldots \end{bmatrix}^\top \,, \qq n=1,2,\ldots\,.
\end{equation}
Since $\lim_{n\rightarrow\infty} \lambda_{1}(\Sigma_n) = +\infty$, this is an AS, for which
\begin{equation}
\|a_n^\top \yb\|^2 =  \frac{1}{ \lambda_{1}(\Sigma_n) } \,(u_1^{n}) ^{\top} \Sigma_n u_1^{n}   = 1
\end{equation}
for every $n$ and hence  the sequence $\yb$ cannot be idiosyncratic.

Conversely, suppose  that $\lambda_1(\Sigma) < +\infty$ and again use
the diagonalization \eqref{Diagonaliz}.  Let $a_n$ be an arbitrary AS and consider the random variable $\zb = \lim_{n \rightarrow \infty} a_n^{\top} \yb =  \lim_{n \rightarrow \infty} a_n^{n\top} \yb^{n}$, which has variance
\begin{equation}
\var[\zb] = \lim_{n\rightarrow\infty} (a_n^{n})^{\top} U_nD_n U_n^{\top}
a^{n}_n := (d_n^{\,n})^{\top} D_n\, d_n^{\,n} \,,
\end{equation}
where the vector $d_n^{\,n} := U_n^{\top} a_n^{n}$ is used to form  the first $n$ elements of  an infinite string, say   $d_n$, whose  remaining entries are taken equal to those of $a_n$; i.e. $d_n(k)= a_n(k)$ for $k >n$. Clearly $d_n$ is an AS.\\
Since $(d_n^{\,n})^{\top} D_n\, d_n^{\,n}= \sum_{k=1}^n \lambda_{k}(\Sigma_n)
d_{n}(k)^2$, one can write
\begin{align*}
\var[\zb] & = \lim_{n\rightarrow\infty} \sum_{i=1}^n \lambda_{k}(\Sigma_n) d_{n}(k)^2 \leq \lim_{n\rightarrow\infty} \lambda_1(\Sigma) \sum_{k=1}^n d_{n}(k)^2 \\ 
&= \lim_{n\rightarrow\infty}\, \lambda_1(\Sigma) \| d_n \|_2^2 = 0
\end{align*}
which shows that $\yb$ is idiosyncratic.
\end{proof}
\subsection{Proof of Theorem \ref{Thm:Strong}}\label{App:C}
\begin{proof}
First we prove the sufficiency of condition \eqref{eq:condition}. Let $k$ be a fixed positive constant and  let
$f_1$ be such that
\begin{equation}
\lim_{n\rightarrow\infty}\| f_1^n  -
\Pi [f_1^n\,|\,\mathcal{F}_1^n] \|_2 = k^{\frac{1}{2}} < +\infty \,.
\end{equation}
Let
\begin{equation}\label{alphas}
\tilde{f}_1^n   = f_1^n - \Pi[f_1^n\,|\,\mathcal{F}_1^n]     = f_1^n - \alpha_2^n f_2^n - \ldots - \alpha_q^n f_q^n \,;
\end{equation}
whence, defining $\tilde{F}^n := \begin{bmatrix} \tilde{f}_1^n
& f_2^n & \ldots & f_q^n \end{bmatrix}$, one can write
$\tilde{F}^n = F^n T^n$,
with $T^n$ is a full rank matrix of the form
\begin{equation} \label{eq:matrixT}
T^n = \begin{bmatrix} 1 & 0 \\ -\alpha_n &I_{q-1} \end{bmatrix} \,,
\end{equation}
where $\alpha_n := \begin{bmatrix} \alpha_2^n & \ldots& \alpha_q^n
\end{bmatrix}^\top$. Since $\tilde{f}_1^n \bot f_i^n $, $i \neq 1$,
  the Gramian matrix of $\tilde{F}^n$ is block diagonal,
\begin{equation}
\tilde{F}^{n\top}\tilde{F}^n = \begin{bmatrix}
\|\tilde{f}_1^n \|^2 & 0 \\ 0 & A_n \end{bmatrix} \,,
\end{equation}
where $A_n$ is a positive definite matrix whose eigenvalues tend to
infinity as $n$ increases. Note that the spectrum of
$\tilde{F}^{n\top}\tilde{F}^n $ contains the eigenvalue
$\|\tilde{f}_1^n \|^2$, which, for $n \rightarrow \infty$,
converges to $k < +\infty$. Now, let us compute the trace of both sides of  the identity $T^n(\tilde{F}^{n\top}\tilde{F}^n)^{-1}T^{n \top}=(F^{n\top}F^n)^{-1}$
 obtaining
\begin{align}
\Tr \left[(F^{n\top}F^n)^{-1}\right] & =\Tr\left[T^n(\tilde{F}^{n\top}\tilde{F}^n)^{-1}T^{n\top}\right] \nonumber \\
 & = \Tr\left[T^{n\top} T^n(\tilde{F}^{n\top}\tilde{F}^n)^{-1}\right]   \nonumber \\
  &  = \Tr \begin{bmatrix}  1 + \|  \alpha_n \|^2 & -\alpha_n^\top  \nonumber\\
-\alpha_n & I_{q-1}  \end{bmatrix} \begin{bmatrix} k^{-1} & 0 \\ 0 &
A_n^{-1} \end{bmatrix} \nonumber\\
& = \Tr \begin{bmatrix}  k^{-1}(1+\|\alpha_n\|^2) & -\alpha_n^\top A_n^{-1} \\ - \alpha_n k^{-1} & A_n^{-1} \end{bmatrix} \nonumber \\
&  =k^{-1}(1+\|\alpha_n\|^2) + \Tr\left[A_n^{-1}\right]
\end{align}
Since the eigenvalues of $A_n$ tend to infinity, those  of
$A_n^{-1}$ tend to zero, while, for every $n$ we have
$k^{-1}(1+\|\alpha_n\|^2) >0$. Thus, one eigenvalue of
$(F^{n\top}F^n)^{-1}$ is bounded below by a fixed constant as $n$ tends to infinity. Hence we
conclude that one eigenvalue of $F^{n\top}F^n$ remains bounded as $
n$ tends to infinity, which is a contradiction.

For the necessity, we define
$f_i^{n_1,n_2}:=\begin{bmatrix}f_i(n_1)&\ldots&f_i(n_2)\end{bmatrix}^\top$
and   observe that condition \eqref{eq:condition} implies that
\begin{equation}
\lim_{n\rightarrow\infty}\| f_i^{n_1,n} -
\Pi[f_i^{n_1,n}\,|\,\mathcal{F}_i^{n_1,n}] \|_2 =  +\infty \,,
\end{equation}
for every index $i = 1,\,\ldots,\,q$ and natural number $n_1$.
Moreover, by definition of limit, we have that for every $n_1 \in
\mathbb{N}$ and $K \in \mathbb{R}_{+}$ there exists an integer $n_2$
such that the inequality (with an obvious   meaning of the symbols)
\begin{equation} \label{eq:ineq}
\| f_i^{n_1,n_2} - \Pi[f_i^{n_1,n_2}\,|\,\mathcal{F}_i^{n_1,n_2}]
\|^2_2 \geq K
\end{equation}
holds for every $i = 1,\,\ldots,\,q$.

Now, consider the sequence generated by the $q$-th eigenvalue of the
matrix $F^{n\top}F^n$, say  $\{\lambda_q^n\,;\,n\in\mathbb{N}\}$.
Our goal is to show that for every natural $n_1$ and arbitrary
constant $c > 0$ there exists a natural number $n_2$ such that
$\lambda_q^{n_2} \geq \lambda_q^{n_1} + c$, so that
$\lim_{n\rightarrow\infty} \lambda_q^n = +\infty$. To   this end, fix
$c$ and, for a generic $n_1$, consider the normalized eigenvector of
the $q$-th eigenvalue of the matrix $F^{n_2\top}F^{n_2}$, say
$v_q^{n_2}$. 
Since for every $n_2 > n_1$ it holds that
\begin{equation}
F^{n_2\top}F^{n_2} = F^{n_1\top}F^{n_1} + F^{n_1,n_2\top}F^{n_1,n_2}
\,,
\end{equation}
 we can write
\begin{equation}
\lambda_q^{n_2} = v_q^{n_2 \top} F^{n_1\top}F^{n_1} v_q^{n_2} +
v_q^{n_2 \top} F^{n_1,n_2\top}F^{n_1,n_2}  v_q^{n_2} \,.
\end{equation}
Consider the first term on the right side of this identity;
expressing $v_q^{n_2}$ as a linear combination of the eigenvectors
of $F^{n_1\top}F^{n_1}$, i.e. $v_q^{n_2} = \alpha_1 v_1^{n_1} +
\ldots + \alpha_q v_q^{n_1}$, the orthogonality of
these eigenvectors implies that
\begin{equation}
v_q^{n_2 \top} F^{n_1\top}F^{n_1} v_q^{n_2}   = \lambda_1^{n_1}
\alpha_1^2 + \ldots + \lambda_q^{n_1} \alpha_q^2
  \geq \lambda_q^{n_1} \sum_{i=1}^q \alpha_i^2 = \!\lambda_q^{n_1}
\end{equation}
so that
\begin{equation} \label{eq:parziale}
\lambda_q^{n_2} \geq \lambda_q^{n_1} + v_q^{n_2 \top}
F^{n_1,n_2\top}F^{n_1,n_2}  v_q^{n_2} \,.
\end{equation}
Now we have to show that we can always find an integer $n_2$ such
that the quantity $$v_q^{n_2 \top} F^{n_1,n_2\top}F^{n_1,n_2}
v_q^{n_2}$$ can be chosen arbitrarily large, i.e. greater or equal to
the previously fixed constant $c$ . To   this end, take $n_2$ such that
for every $i = 1,\,\ldots,\,q$ the inequality \eqref{eq:ineq} holds,
with $K = c\sqrt{q}$. Then, there is an index $i$ such that the $i$-th component of the norm one vector $ v_q^{n_2} =
\begin{bmatrix} w_1 & \ldots & w_q \end{bmatrix}^\top$, satisfies the inequality  $w_i \geq \frac{1}{\sqrt{q}}$. Without loss of generality we may and shall assume that  $i = 1$. Let $\alpha_2 \ldots \alpha_q$ be defined as in \eqref{alphas} and set
\begin{equation} \label{eq:cambio_base}
\tilde{f}_1^{n_1,n_2} := f_1^{n_1,n_2} - \alpha_2 f_2^{n_1,n_2} -
\ldots - \alpha_q f_q^{n_1,n_2} \,,
\end{equation}
so that we have
\begin{equation} \label{eq:prod}
v_q^{n_2 \top} F^{n_1,n_2\top}F^{n_1,n_2} v_q^{n_2} = v_q^{n_2 \top}
T^{n \top}\!\! \begin{bmatrix} \|\tilde{f}_1^{n_1,n_2} \|^2 \!&\! 0\! \\ \!0 \!&\! A_n \!\end{bmatrix} T^n v_q^{n_2}
\end{equation}
where $T^n$ has the same structure as in \eqref{eq:matrixT}. Now,
observe that
\begin{equation}
T^n v_q^{n_2} = \begin{bmatrix} w_1 & -\alpha_2 w_1 + w_2 & \ldots &
-\alpha_q w_1 + w_q \end{bmatrix}^\top \,,
\end{equation}
which implies  that \eqref{eq:prod} is equal to $w_1^2
\|\tilde{f}_1^{n_1,n_2} \|^2 + Q ,$ where $Q$ is a positive
constant. Hence, from \eqref{eq:cambio_base} we have $v_q^{n_2
\top} F^{n_1,n_2\top}F^{n_1,n_2}  v_q^{n_2} > c$  and hence, recalling
\eqref{eq:parziale},
\begin{equation}
\lambda_q^{n_2} \geq \lambda_q^{n_1} + c \,.
\end{equation}
which proves the theorem.
\end{proof}

\subsection{Proof of Proposition \ref{DynStationary}}\label{App:E}
\begin{proof}
For infinite covariance matrices we  have the  positive semidefinite ordering $\Sigma_1 \leq \Sigma_2$ if and only if $a^{\top} (\Sigma_1-\Sigma_2)a \leq 0$ for all finite support  sequences $a \in \Rbb^{\infty}$. Let $\yb(t)$ be a time-stationary random field   with matrix covariance function $\Sigma(\tau):= \E\yb(t+\tau)\yb(t)^{\top}$. For any finite support sequence $a\in \Rbb^{\infty}$  the scalar covariance function $\sigma_{\zb}(\tau)$ of the process $\zb(t):=a^{\top}\yb(t)$ satisfies the  well-known (Schwartz) inequality $\sigma_{\zb}(\tau) \leq \sigma_{\zb}(0)$; hence the matrix covariance function  of a  stationary process satisfies  $\Sigma(\tau) \leq \Sigma(0)$. It follows that if $\Sigma(0)$ is a bounded operator in $\ell^2$ then all covariances $\Sigma(\tau)$ must also be bounded. The following  lemma is a straightforward consequence of this fact.
\begin{lemma} \label{PropIdio}
A time-stationary random field   $\yb(t)$ is idiosyncratic; that is 
$$
\lim_{n\to \infty} \,a_n^{\top}\yb(t)=0 \,, \q \text{ for \;\;any} \q t\in \Zbb
$$
for all AS's $a_n$,  if and and only if $\yb(0)$ is an idiosyncratic sequence.
 \end{lemma}
 The lemma above implies in particular that   a covariance function   $\Sigma(\tau)$ is the covariance of an idiosyncratic stationary random field iff  $ \Sigma(0)$ is a   bounded   operator on $\ell^2$.

Now assume that $\Sigma(0)$  has a static GFA decomposition of rank $q$ and let $\xb$ and $\tilde{\yb}$ be constructed as in the proof of Proposition \ref{Prop:Constr} so that the vector $\yb(0)$ has a GFA representation
$$
\yb(0)= F \xb +\tilde{\yb}\,
$$
where $\xb$ and $\tilde{\yb}$ have uncorrelated  components belonging to $H(\yb(0))$. Let $\Hb(\yb)$ denote the closed linear span of the scalar components of the random field $\yb$; i.e.
$$
\Hb(\yb):= \text{closure of} \left\{ \sum_{k,t} a_{k,t} \yb(k,t)\,;\, k=0,1,2,\ldots; t \in \Zbb \right\}
$$
where the    real numbers $ a_{k,t}$ are arbitrary  but non zero only for a finite set of values of the indices. Let
$U: \Hb(\yb)\rightarrow \Hb(\yb)$ be the forward shift operator of the process defined, for all finite support vectors $a$, by
$$
U a^{\top}\yb(t) = a^{\top}\yb(t+1)\,, \qquad t\in \Zbb
$$
It is well known that $U$ can be extended to the whole of $\Hb(\yb)$ as  a unitary operator \cite{Rozanov-67} and that every scalar random variable $ \zb \in \Hb(\yb)$ can be propagated in time by the action of the shift as $\zb(t):= U^{t} \zb$ to form a   stationary scalar process. This unitary propagation can in fact be applied to vector random variables of arbitrary dimension.\\
It follows hat  $\xb(t):= U^t \xb$ and  $\tilde{\yb}(t):=U^t \tilde{\yb}$ have uncorrelated components for all $t$. Moreover, by Lemma \ref{PropIdio} the stationary process $\tilde{\yb}(t)$ is idiosyncratic and  $\hat{\yb}(t):= F\xb(t)$ is  a flocking process since the columns of $F$ are strongly linearly independent. 
 \end{proof}

\subsection{Proof of Proposition  \ref{thm:leader}} \label{App:F}
Consider first the case $n=3$ and write the solution to the related Lyapunov equation as
\begin{equation}
\Sigma_3 = \begin{bmatrix} p_1 & p_2 & p_3 \\ p_2 & p_4 & p_5 \\ p_3 & p_5 & p_6 \end{bmatrix} \,.
\end{equation}
Then, simple calculations show that
$$
p_1 = \frac{1}{1-a^2} \quad,\quad p_2 = p_3 = \frac{a}{(1+a)(1-a^2)} \quad,
$$
\begin{align}
p_4 &= p_6 = \frac{1}{1-a^2}+\frac{1}{(1+a)^2}+2\frac{a^2}{(1+a)^2(1-a^2)} \nonumber \\
p_5 &= \frac{1}{(1+a)^2} +2\frac{a^2}{(1+a)^2(1-a^2)}
\end{align}
Now assume that, for a given $n \geq 3$, the solution to the equation $X_n - A_nX_nA_n^\top = I_n$ has the form
\begin{equation} \label{eq:sol_n}
\Sigma_n = \begin{bmatrix} p_1  & p_3 & p_3 & p_3& \ldots & p_3 \\
                           p_3  & p_4 & p_5 & p_5 & \ldots & p_5 \\
                           p_3 & p_5 &  p_4 & p_5 &  \ldots & p_5 \\
                           \vdots & \vdots & \ddots & \ddots & \ddots & \vdots \\
                           p_3 & p_5 & \ldots & p_5 & p_4 & p_5 \\
                           p_3 & p_5 & \ldots & p_5 & p_5 & p_4 \end{bmatrix} \,;
\end{equation}
our goal is to show that $\Sigma_{n+1}$ has an analogous structure, that is
\begin{equation} \label{eq:sol_n+1}
\Sigma_{n+1} = \begin{bmatrix} \Sigma_n & p \\ p^\top & p_4 \end{bmatrix} \,,
\end{equation}
where $p = \begin{bmatrix} p_3 & p_5 & \ldots &p_5  \end{bmatrix}^\top$. To this end, express the variable $X_{n+1}$ as
$$
X_{n+1} = \begin{bmatrix} X_n & z \\ z^\top & u \end{bmatrix} \,
$$
and the matrix $A_{n+1}$ as
$$
A_{n+1} = \begin{bmatrix} A_n & 0 \\ b^\top & a \end{bmatrix} \,,
$$
where $b = \begin{bmatrix} 1-a & 0 & \ldots &0  \end{bmatrix}^\top$. Then the related Lyapunov equation has the form
\begin{equation}
\begin{bmatrix} X_n & z \\ z^\top & u \end{bmatrix} - \begin{bmatrix} A_n & 0 \\ b^\top & a \end{bmatrix} \begin{bmatrix} X_n & z \\ z^\top & u \end{bmatrix} \begin{bmatrix} A_n^\top & b \\ 0 & a \end{bmatrix} = I_{n+1} \,,
\end{equation}
which can be rewritten as
\begin{align} \label{eq:lyapn+1}
&\begin{bmatrix} X_n - A_n X_n A_n^\top & (I_n-aA_n)z - A_n X_n b \\ z^\top(I_n\!-\!aA_n^\top) -   b^\top X_n A_n^\top & (1\!-\! a^2)u -  b^\top\! X_n b - 2a b^\top\! z\end{bmatrix} \!\!= \nonumber\\ 
&=\begin{bmatrix}I_n & 0 \\ 0 & 1 \end{bmatrix} \,.
\end{align}
The top-left block of \eqref{eq:lyapn+1} admits the solution given by \eqref{eq:sol_n}. Then, by inserting this into the top-right block, one then gets $z = p$. Finally, by  exploiting the former findings, from the bottom-right block one has $u = p_4$, and hence the solution is exactly \eqref{eq:sol_n+1}. Hence, one can easily observe that the matrix $\bar \Sigma_n$, obtained by discarding the first row and column from $\Sigma_n$, has the structure
\begin{equation}
  \begin{bmatrix}  p_5 & p_5 & \ldots \\ p_5 & p_5 & \\ \vdots & & \ddots \end{bmatrix} + \diag \{p_4 - p_5,\,\ldots,\, p_4 - p_5\} \,
\end{equation}
that is, it admits a rank-one plus diagonal decomposition, where the vector generating the rank-one matrix is $\bar f =  \begin{bmatrix} \sqrt{p_5} &  \sqrt{p_5} & \ldots  \end{bmatrix}$, with $\sqrt{p_5} = (1+a^2)^{\frac{1}{2}}/((1+a)(1-a^2)^{\frac{1}{2}})$, while the elements of the diagonal matrix are $p_4 - p_5 = 1/(1-a^2)$. Now, to complete the proof we need to show that also the matrix $\Sigma_n$ admits a similar decomposition, i.e.  
$$
\Sigma_n = \begin{bmatrix} f_0 \\ \bar f \end{bmatrix} + \diag \{\sigma_0^2,\, 1/(1-a^2),\,\ldots,\, 1/(1-a^2)\} \,.
$$
This can be done be observing that, for any integer $k >0$, it has to be $p_3 = f_0 \bar f(k)$, and so $f_0 = a/(1-a^4)^{\frac{1}{2}}$. Moreover, $\sigma_0^2$ is easily found by computing $\sigma_0^2 = p_1 - f_0^2 = 1$. Finally, since by comparing the leader dynamics
$$
\yb_0(t) = a \yb_0(t-1) + \wb_0(t-1)
$$
with its GFA decomposition
$$\yb_0(t) = f_0 \xb(t) + \tilde \yb_0 (t) \,,$$
where both $\tilde \yb_0 (t)$ and $\wb_0(t-1)$ are white noise with the same variance, it has to be that $\xb(t) = (1-a^4)^{\frac{1}{2}} \yb_0(t-1)$.

\subsection{Proof of Proposition \ref{thm:consensus}}\label{App:G}
For   $n \geq n_0$, consider the Lyapunov equation
$$ \Sigma_n = \bar A_n \Sigma_n \bar A_n^\top + I_n \,,$$
whose solution can be written
\begin{equation}
\Sigma_n = \sum_{j=0}^\infty \bar A_n^j (\bar A_n^j)^{\top} \,.
\end{equation}
Since $\bar A_n$ is symmetric, for every $j$ the decomposition
$$\bar A_n^j (\bar A_n^j)^T = U_n S_n^{2j} U_n^{\top}$$
holds, with $S_n$ being the matrix of the singular values of $A$ and $U_n$ a unitary matrix whose columns are the (normalized) eigenvectors of $\bar A_n$.
Note that one of such singular values is $\left(1 - \frac{1}{n} \right)^{2}$  and the relative eigenvector is $\frac{1}{\sqrt n}{\mathbf 1}\!\!1_n$, i.e. the normalized vector of all  $1$'s in $\in \Rbb^n $. The other eigenvalues are strictly stable. Then we can express $\Sigma_n$ as
\begin{align} \label{eq:proof_consensus}
\Sigma_n & = U_n \left( \sum_{j=0}^\infty S_n^{2j} \right) U_n^\top \nonumber \\
         & = \frac{{\mathbf 1}\!\!1}{\sqrt n} \left( \sum_{j=0}^\infty \left(1 - \frac{1}{n} \right)^{2j} \right) \frac{{\mathbf 1}\!\!1}{{\sqrt n}}^\top + \tilde U_n \left( \sum_{j=0}^\infty \tilde S_n^{2j} \right) \tilde U_n^\top \nonumber \\
         & = {\mathbf 1}\!\!1 \frac{n}{2n+1} {\mathbf 1}\!\!1^\top + \tilde U_n \left( \sum_{j=0}^\infty \tilde S_n^{2j} \right) \tilde U_n^\top \,,
\end{align}
where $\tilde U_n$ and $\tilde S_n$ are obtained from $U_n$ and $S_n$ by removing the parts related to the eigenvalue $\left(1 - \frac{1}{n} \right)^{2}$. Now,   take the averaging sequence \eqref{Aritmean}
\begin{equation} \label{eq:as_cons}
a_n = \frac{1}{n}\left[{\mathbf 1}\!\!1_n^\top \, 0 \, \ldots \right] \quad , \, {\mathbf 1}\!\!1_n \in \mathbb{R}^n
\end{equation}
and apply it to $\Sigma_n$, that is,   compute $\frac{1}{n}{\mathbf 1}\!\!1^\top_n \Sigma_n {\mathbf 1}\!\!1_n\frac{1}{n} $.  Then, letting $n \rightarrow \infty$, the second term on the right hand side of \eqref{eq:proof_consensus} vanishes, while the first term gives
\begin{equation}
\frac{{\mathbf 1}\!\!1^\top_n  {\mathbf 1}\!\!1_n {\mathbf 1}\!\!1^\top_n  {\mathbf 1}\!\!1_n}{n(2n+1)} = \frac{n}{2n+1}\,,
\end{equation}
which converges asymptotically to a finite value. One can easily verify that the averaging sequence \eqref{eq:as_cons} is the only sequence converging to nonzero values.

\bibliographystyle{plain}

\bibliography{generalized_factor_models}

\end{document}